\documentclass[journal]{IEEEtran}
\usepackage{multicol}
\usepackage{amsmath,epsfig,comment}
\usepackage{amsthm}
\usepackage{tcolorbox}
\usepackage{amssymb}\usepackage{multirow}
\usepackage{mathrsfs}
\usepackage{amsmath}
\usepackage{arydshln}
\usepackage{multirow}
\usepackage{arydshln}

\usepackage{cases} 
\usepackage{amssymb,amsmath,cite}
\usepackage{epsfig}
\usepackage{color}
\usepackage{bm}
\usepackage{graphicx,subfigure}
\usepackage{algorithm}
\usepackage{algpseudocode}
\usepackage{multirow}
\usepackage{soul}

\usepackage{extarrows}

\color{black}

\def\ba{\mathbf{a}}
\def\bX{\mathbf{X}}
\def\bY{\mathbf{Y}}

\def\blam{\boldsymbol{\lambda}}
\def\I{\mathcal{I}}
\def\RR{\mathcal{R}}

\def\I{\mathcal{I}}
\def\R{\mathbb{R}}

\def\C{\mathbb{C}}
\def\bb{\mathbf{b}}

\def\bQ{\mathbf{Q}}
\def\bq{\mathbf{q}}

\def\Re{\mathcal{R}}
\def\Im{\mathcal{I}}
\def\y{\mathbf{y}}

\def\x{\mathbf{x}}

\def\h{\mathbf{h}}
\def\bF{\mathbf{F}}
\def\bA{\mathbf{A}}
\def\bM{\mathbf{M}}

\def\bH{\mathbf{H}}
\def\bB{\mathbf{B}}

\def\bh{\mathbf{h}}
\def\bthe{\boldsymbol{\Theta}}

\def\bd{\mathbf{d}}

\def\bY{\mathbf{Y}}

\newtheorem{theorem}{Theorem}

\newtheorem{lemma}{Lemma}
\newtheorem{remark}{Remark}

\newtheorem{proposition}{Proposition}

\newtheoremstyle{noparens}%
  {}{}%
  {\itshape}{}%
  {\bfseries}{.}%
  { }%
  {\thmname{#1}\thmnumber{ #2}\mdseries\thmnote{ #3}}

\theoremstyle{noparens}

\date{today}
\title{Finite-Resolution Microwave Linear Analog Computer (MiLAC)-Aided Multiuser Beamforming}
\author{\IEEEauthorblockN{Zheyu Wu, \IEEEmembership{Member, IEEE}, and Bruno Clerckx, \IEEEmembership{Fellow, IEEE}}
  	\thanks{Z. Wu and B. Clerckx are with the Department of Electrical and Electronic Engineering, Imperial College London, London, SW7 2AZ, U.K. (email: \{zheyu.wu, b.clerckx\}@imperial.ac.uk). This work has been partially supported by UKRI grant EP/Y004086/1, EP/X040569/1, EP/Y037197/1, EP/X04047X/1, EP/Y037243/1.}
  }
\date{today}
\begin{document}
\maketitle
\begin{abstract}
Large-scale antenna arrays are essential to future wireless networks but impose significant hardware and digital-processing burdens. The microwave linear analog computer (MiLAC) offers a promising architecture for addressing these challenges. In a MiLAC, beamforming is realized entirely in the analog domain through a reconfigurable multiport microwave network comprising tunable admittance elements. However, existing studies assume that these elements are continuously tunable, whereas practical implementations can support only finitely many configurable states.  This paper investigates finite-resolution MiLAC-aided multiuser beamforming under lossless and reciprocal constraints with prescribed connectivity patterns. Each tunable susceptance is selected from a symmetric uniform finite-resolution codebook. 
We formulate an online problem that jointly optimizes the discrete susceptances and power allocation for a prescribed codebook, and an offline problem that designs the codebook based on channel statistics. 
 By exploiting the odd-uniform codebook structure, we express each multilevel susceptance as a weighted sum of two-level variables and develop exact continuous penalty models for both the online beamforming and offline codebook design problems. We prove that, for sufficiently large penalty parameters, these penalty models are globally equivalent to their original discrete counterparts. Building on these formulations, we develop efficient alternating direction method of multipliers (ADMM)-based algorithms for solving the two problems.
 Numerical results demonstrate that the performance of an unquantized fully-connected MiLAC can be approached with substantially reduced connectivity and finite resolution. In particular, a 3-bit stem-connected MiLAC achieves more than $95\%$ of the sum-rate performance of the unquantized fully-connected MiLAC while requiring only $22\%$ of its tunable components.
\end{abstract}
\begin{IEEEkeywords}
Microwave linear analog computer (MiLAC), finite-resolution, codebook design, penalty method, alternating direction method of multipliers (ADMM).
\end{IEEEkeywords}

\section{Introduction}
Toward the 6G era, wireless networks are expected to accommodate substantially higher data traffic and user densities, imposing increasing demands on the spectral efficiency and spatial multiplexing capability. Ultra-massive or gigantic multiple-input multiple-output (MIMO) \cite{6G_emil}, enabled by very large antenna arrays, is envisioned as a key technology to meet these requirements. However, as the array dimension grows, the hardware and signal processing burden of conventional fully digital beamforming becomes prohibitive. In particular, each antenna typically requires a dedicated radio frequency (RF) chain comprising power-hungry components such as high-resolution digital-to-analog converters (DACs), analog-to-digital converters (ADCs), and power amplifiers (PAs), while the complexity of digital beamforming also increases with the array size. This scalability challenge has motivated growing interest in architectures that shift part of the beamforming operation from the digital baseband to the analog domain.

Hybrid digital-analog beamforming is a prominent example of such architectures \cite{hybrid_Ayach,hybrid2,hybrid_survey}. In conventional hybrid beamforming, a small number of RF chains are combined with an analog phase-shifter network to perform beamforming jointly in the digital and analog domains. With only twice as many RF chains as data streams, hybrid beamforming can realize arbitrary fully digital beamforming matrices under ideal phase-shifter assumptions \cite{hybrid2}. 
Other approaches include stacked intelligent metasurfaces (SIMs), which manipulate electromagnetic waves through cascaded programmable layers \cite{an2023stacked}, and dynamic metasurface antennas (DMAs), which perform analog beamforming through tunable radiating elements coupled to waveguides \cite{DMA}.

Recently, the microwave linear analog computer (MiLAC) has been proposed as a new architecture for performing signal processing and computation directly in the analog domain \cite{part1,part2}.  A MiLAC can be modeled as a reconfigurable multiport microwave network composed of tunable impedance or admittance elements. In MiLAC-aided transmitters, its input and output ports are connected to RF chains and antennas, respectively, and tunable impedance or admittance elements determine the beamforming transformation \cite{part2,nerini2026overview}. This architecture requires only as many RF chains as data streams and avoids symbol-level digital beamforming.  Under an idealized model with arbitrarily tunable complex admittances, a MiLAC can realize an arbitrary beamforming matrix and implement specific schemes, such as zero-forcing (ZF), at a significantly lower computational cost than their digital counterparts \cite{part2}. 

Subsequent studies have investigated MiLACs under lossless and reciprocal constraints, which are standard assumptions in microwave network theory \cite{microwavebook}. Losslessness assumes no power dissipation within the network, while reciprocity avoids the need for nonreciprocal components. Together, these constraints yield a purely imaginary and symmetric admittance matrix, which can be parameterized by real-valued susceptances.   The beamforming capabilities of lossless and reciprocal MiLACs have been characterized for both point-to-point and multiuser MIMO systems. Specifically, MiLAC-aided beamforming was shown to achieve the performance of fully digital beamforming in point-to-point MIMO systems \cite{MIMOcapacity}. In multiuser systems, however, a performance gap generally remains, but diminishes as the user channels become more orthogonal  \cite{fang2026,wu2026microwave}. To close this gap, hybrid digital-MiLAC \cite{wu2026microwave} and two-layer MiLAC \cite{zhou2026twolayer} architectures have been proposed, both of which can achieve fully digital beamforming performance in multiuser systems. A stem-connected MiLAC was proposed in \cite{MIMOcapacity2} to reduce the circuit complexity. This architecture can match the performance of a fully-connected MiLAC  \cite{MIMOcapacity2,zhang2026beamforming}, while requiring a number of tunable components that scales linearly with the number of transmit antennas.  Beyond these architectural and beamforming developments, research on MiLAC-aided systems has addressed channel estimation \cite{zhang2026channel,zhang2026channel2}, wideband transmission \cite{peng2026hybrid}, and sensing \cite{liu2026microwave,zhang2026rfchain}. Other studies have characterized the energy efficiency of MiLAC-aided systems \cite{zhang2026quantization,zhou2026lossy} and investigated the effects of hardware characteristics, including component losses \cite{zhou2026lossy}, mutual coupling \cite{nerini2026physics}, and impedance mismatch \cite{xiong2026}.


Despite these advances, most existing studies on MiLAC-aided beamforming assume continuously tunable components. In practical implementations, however, tunable susceptance components can only support a finite number of configurable states. This finite-resolution constraint fundamentally changes the MiLAC-aided beamforming design problem, resulting in a mixed discrete-continuous optimization problem in which the discrete susceptances and continuous power allocation variables jointly determine the achievable sum rate. The problem is further complicated by the nonlinear dependence of the beamforming matrix on the susceptances through a matrix inverse. In addition to optimizing the discrete susceptances for a given codebook, the codebook itself must also be carefully designed. For a prescribed number of quantization bits, it determines the available susceptance levels and the overall dynamic range, both of which affect the achievable  performance.  
Effective finite-resolution MiLAC design therefore requires both online optimization of the network configuration for each channel realization and offline design of the codebook according to the channel statistics.

A very recent work \cite{zhang2026beamforming} has taken an initial step toward finite-resolution MiLAC-aided  beamforming  design by first optimizing the susceptances under continuous tuning and then projecting them onto a predefined finite-resolution codebook. An alternating refinement procedure was further proposed to improve the projected solution by updating the susceptances one at a time. However, such a sequential approach can be sensitive to the initialization. The reported numerical results show that the sum rate gradually saturates as the number of quantization bits increases, with a noticeable performance gap from the unquantized benchmark.  It remains unclear whether the gap is due to the inherent limitations of finite resolution or the suboptimality of the optimization algorithm. Moreover, the quantization range and grid are predefined rather than optimized.

Motivated by these limitations, this paper focuses on finite-resolution MiLAC-aided multiuser systems and investigates both online beamforming design and offline codebook design.  We consider a lossless and reciprocal MiLAC and accommodate general MiLAC architectures by setting the susceptances associated with disconnected port pairs to zero. Each remaining tunable susceptance is constrained to an odd-uniform finite-resolution codebook symmetric about zero. 
 The main contributions are summarized as follows.

1) \emph{Finite-resolution MiLAC-aided beamforming model:}  We formulate the finite-resolution MiLAC-aided sum-rate maximization problem for multiuser multiple-input single-output (MISO) systems under lossless and reciprocal constraints, while accommodating general MiLAC architectures through prescribed connectivity patterns of the underlying microwave network. For a fixed codebook, the online problem jointly optimizes the tunable susceptances and the power-allocation vector according to the instantaneous channel realization. The offline problem optimizes a shared quantization parameter, which determines the codebook spacing and dynamic range, according to the channel statistics.

2) \emph{Online beamforming design:} For a given finite-resolution codebook, we develop an efficient optimization framework for the discrete MiLAC configuration. By exploiting the odd-uniform structure of the codebook, we express each multilevel susceptance as a weighted sum of two-level variables.  We then develop a continuous quadratic penalty model and prove that it is globally equivalent to the original discrete problem when the penalty parameter is sufficiently large.  Based on this exact penalty model, we develop a penalty-continuation algorithm together with an ADMM-based algorithm for efficiently solving the resulting nonconvex problem.

3) \emph{Offline codebook design:} We further optimize the finite-resolution codebook according to the channel statistics.  Using sample average approximation, we formulate an offline design problem in which a shared quantization parameter is jointly optimized over multiple training channel realizations. We extend the exact penalty framework to this setting and prove that the resulting continuous penalty problem remains globally equivalent to the original discrete sample average problem for a sufficiently large penalty parameter. An efficient ADMM-based algorithm is then developed, in which the shared codebook parameter is updated jointly across the training samples using a proximal gradient algorithm.

4) \emph{Performance evaluation and design insights:}
Extensive numerical results validate the effectiveness of the proposed online and offline designs. The proposed online beamforming algorithm achieves much higher sum rates than the methods in \cite{zhang2026beamforming}, and the offline codebook design achieves performance close to that obtained by exhaustive search. Furthermore, the results reveal the effects of quantization resolution and circuit connectivity of MiLAC on the achievable performance. In particular, a 3-bit stem-connected MiLAC can approach the performance of an unquantized fully-connected MiLAC while requiring substantially fewer tunable interconnections.

\emph{Organization:}
The rest of this paper is organized as follows. Section \ref{sec:2} introduces the MiLAC-aided multiuser MISO system model, describes the considered hardware constraints, and formulates the online beamforming and offline codebook-design problems. Sections \ref{sec:online} and \ref{sec:offline} develop the proposed algorithms for solving the online and offline problems, respectively.   Section \ref{sec:simulation} presents numerical results to evaluate the proposed algorithms and investigate the effects of quantization resolution and circuit connectivity. Finally, Section \ref{sec:conclusion} concludes the paper. 

\emph{Notations:}
Throughout the paper, scalars, column vectors, matrices, and sets are denoted by lowercase letters, bold lowercase letters, bold uppercase letters, and calligraphic letters, respectively.  For a matrix
$\mathbf X$, $[\mathbf X]_{i,j}$ denotes its $(i,j)$th entry, while $[\mathbf X]_{\mathcal I_1,\mathcal I_2}$ denotes the submatrix formed by the rows indexed by $\mathcal I_1$ and the columns indexed by
$\mathcal I_2$. The operators $(\cdot)^T$, $(\cdot)^H$, and
$(\cdot)^{-1}$ denote the transpose, Hermitian transpose, and matrix inverse, respectively. The operators $\Re(\cdot)$ and $\Im(\cdot)$ denote
the real and imaginary parts, respectively,  and $\mathbb{E}[\cdot]$ denotes the expectation operator. For a vector $\mathbf x$,
$\operatorname{diag}(\mathbf x)$ denotes the diagonal matrix whose
diagonal entries are given by $\mathbf x$. The operators
$\operatorname{vec}(\cdot)$ and $\operatorname{blkdiag}(\cdot)$ denote
vectorization and block-diagonal concatenation, respectively. The norms $\|\cdot\|_2$, $\|\cdot\|_F$, and $\|\cdot\|_\infty$ denote
the Euclidean/spectral norm, Frobenius norm, and infinity norm,
respectively. The symbols $\mathbf I$ and $\mathbf 0$ denote the
identity matrix and the all-zero matrix, with dimensions specified by
subscripts when necessary.
\section{System Model and Problem Formulation}\label{sec:2}
\subsection{System Model}
\begin{figure}
\includegraphics[width=0.5\textwidth]{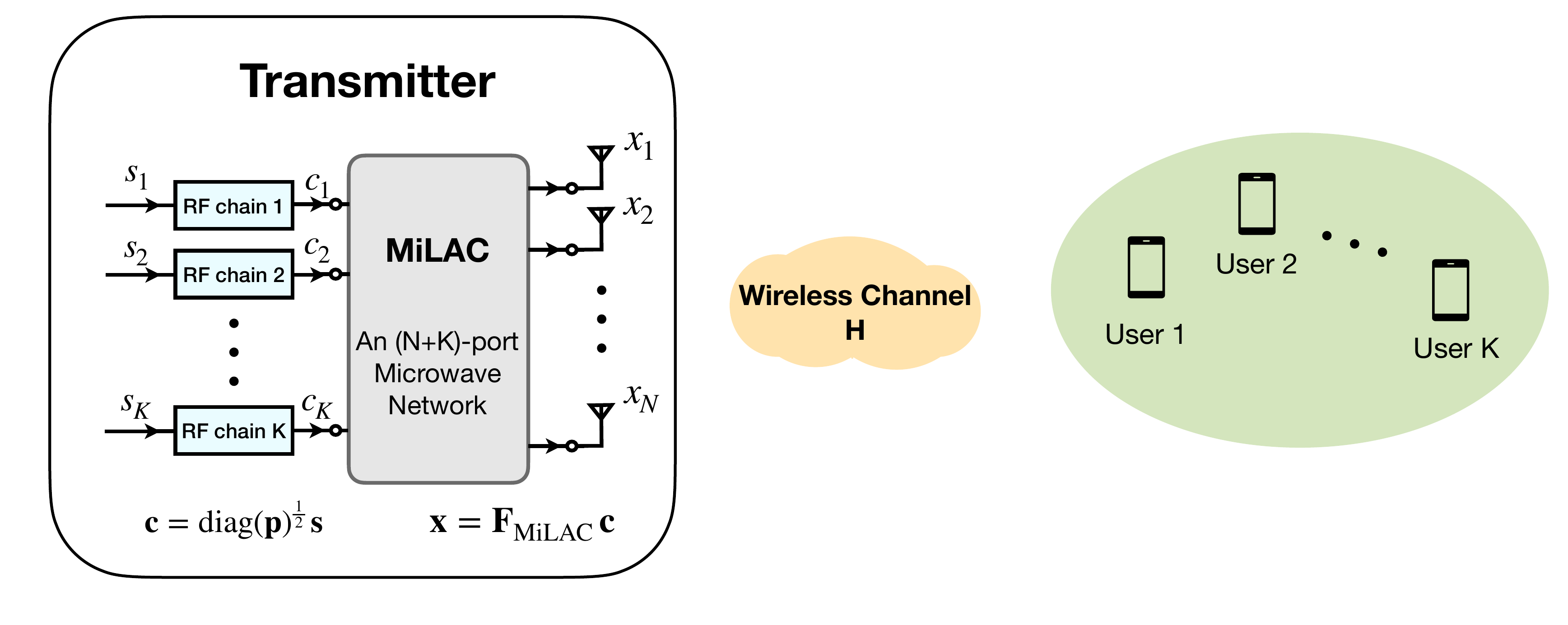}
\caption{A MiLAC-aided multiuser MISO system, where $\mathbf F_{\mathrm{MiLAC}}\in\mathbb C^{N\times K}$ characterizes the input-output relationship of the MiLAC.}
\centering
\label{fig:system}
\end{figure}
Consider a MiLAC-aided downlink multiuser MISO system, where an $N$-antenna transmitter simultaneously serves $K$ single-antenna users. As shown in Fig.~\ref{fig:system}, the transmitter is equipped with $K$ RF chains and an $(N+K)$-port MiLAC. The $K$ input ports of the MiLAC are connected to the $K$ RF chains, while the remaining $N$ output ports are connected to the $N$ transmit antennas.

Let $\mathbf{s}=[s_1,s_2,\ldots,s_K]^T$ denote the data symbol vector intended for the $K$ users, with $\mathbb{E}[\mathbf{s}\mathbf{s}^H]=\mathbf I_K$. After power allocation, the RF-chain output signal is given by $$    \mathbf c = \operatorname{diag}(\sqrt{\mathbf p})\mathbf s, $$ where $\mathbf p=[p_1,p_2,\ldots,p_K]^T$ denotes the power allocation vector. The signals are then fed into the MiLAC, which realizes an analog beamformer between the RF-chain outputs and the antenna ports,
yielding the transmit signal
 $$    \mathbf x
    =
    \mathbf F_{\mathrm{MiLAC}}\mathbf c
    =
    \mathbf F_{\mathrm{MiLAC}}
    \operatorname{diag}(\sqrt{\mathbf p})\mathbf s.
$$Here,  $\mathbf F_{\mathrm{MiLAC}}\in\mathbb C^{N\times K}$ denotes the
MiLAC-induced analog beamforming matrix, which is determined by the circuit topology and tunable admittances inside the network. Its explicit expression will be described in detail in Section \ref{subsec:Fmilac}.

Accordingly, the received signal at user $k$ is given by
\begin{equation}\label{eq:system}
    r_k= \mathbf h_k^H \mathbf x + n_k = \mathbf h_k^H   \mathbf F_{\mathrm{MiLAC}}
    \operatorname{diag}(\sqrt{\mathbf p})\mathbf s     + n_k,
\end{equation}
where $\mathbf h_k\in\mathbb C^{N\times 1}$ denotes the channel from the
transmitter to user $k$, and $n_k\sim\mathcal{CN}(0,\sigma_0^2)$ denotes the additive noise.

\subsection{MiLAC Beamforming Matrix}\label{subsec:Fmilac}
We now discuss the MiLAC beamforming matrix $\bF_{\mathrm{MiLAC}}$. 
A MiLAC is a multiport microwave network composed of tunable admittance (impedance) elements. Fig.~\ref{fig:MiLAC} illustrates a four-port fully-connected MiLAC with \(K=2\)
input ports and \(N=2\) output ports. 
 \begin{figure}
\includegraphics[width=0.3\textwidth]{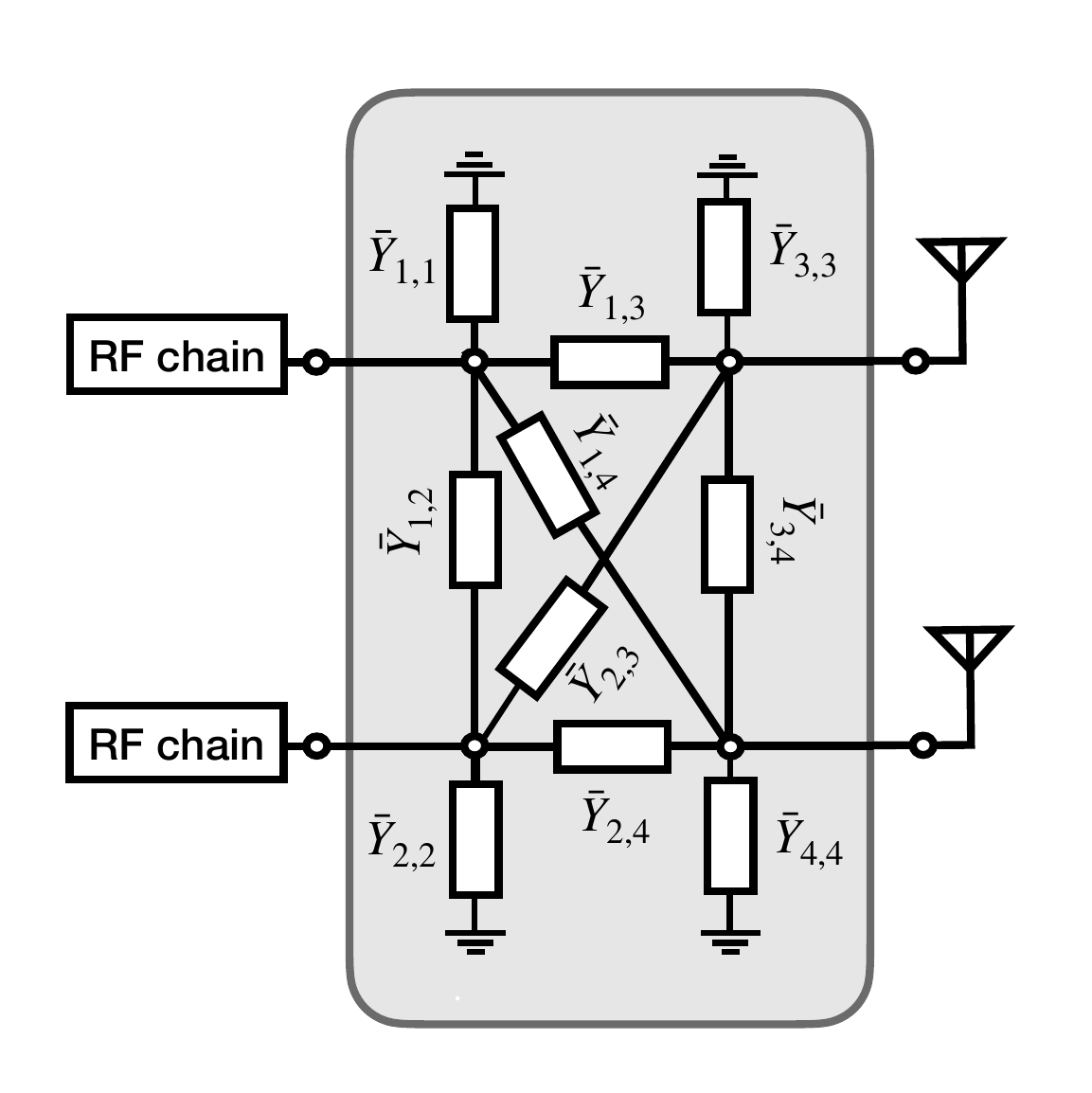}
\centering
\caption{A four-port fully-connected MiLAC with $K=2$ input ports and $N=2$ output ports.}
\label{fig:MiLAC}
\end{figure} 
 Following \cite{part1,part2}, the MiLAC-induced beamforming matrix $\bF_{\mathrm{MiLAC}}$ can be expressed in terms of the admittance matrix    $\bY\in\C^{(N+K)\times (N+K)}$ of the underlying microwave network as    
\begin{equation}\label{Fmilac:Y}
\mathbf{F}_{\text{MiLAC}}=\left[\left(\mathbf{I}_{N+K}+{Z_0\mathbf{Y}}\right)^{-1}\right]_{K+1:K+N,1:K},
\end{equation}
where $Z_0$ denotes the reference impedance, typically set to $50 \, \Omega$.
The admittance matrix $\bY$  is determined by the circuit topology and tunable admittance elements.   Let \(\bar{Y}_{i,j}\) denote the admittance connecting the \(i\)-th and \(j\)-th ports for \(i\neq j\), and let \(\bar{Y}_{i,i}\) denote the  admittance connecting the \(i\)-th port to ground, as shown in Fig. \ref{fig:MiLAC}. Then the entries of the admittance matrix \(\mathbf{Y}\) are given by
\begin{equation}\label{admittance}
[\bY]_{i,j}=\left\{
\begin{aligned}
&-\bar{Y}_{i,j},~~~~~~~~~\,~~&\text{if}~i\neq j;\\
&\bar{Y}_{i,i}+\textstyle\sum_{n\neq i}\bar{Y}_{i,n},~~&\text{if }i=j. 
\end{aligned}\right.
\end{equation}
In particular, $\bar{Y}_{i,j}=0$ if the $i$-th and $j$-th ports are not connected.

As shown in \cite{MIMOcapacity}, the MiLAC-induced beamforming matrix can also be represented using the scattering matrix of the microwave network. Let \(\bthe\in\mathbb{C}^{(N+K)\times(N+K)}\) denote the scattering matrix of the $(N+K)$-port MiLAC. Then,
\begin{equation}\label{F:theta}
\bF_{\mathrm{MiLAC}}=\frac{1}{2}\left[\bthe\right]_{K+1:N+K,1:K}.
\end{equation}
 The expression in \eqref{F:theta} follows from \eqref{Fmilac:Y} and the standard relationship between the scattering matrix and the admittance matrix of a multiport microwave network \cite{microwavebook}, namely 
 \begin{equation}\label{Eq:thetaY}
\bthe
=
\left(\mathbf I_{N+K}+Z_0\bY\right)^{-1}
\left(\mathbf I_{N+K}-Z_0\bY\right).
\end{equation}

\subsection{Hardware Constraints on MiLAC}
We next introduce several hardware constraints on MiLAC considered in this paper.

\subsubsection{Lossless and reciprocal constraint}The admittance components of the MiLAC are typically assumed to be lossless and reciprocal \cite{MIMOcapacity,MIMOcapacity2,fang2026,wu2026microwave,zhou2026twolayer,zhang2026beamforming}. The lossless constraint ensures that the network incurs no power dissipation. In this case, the admittance components are purely imaginary and can be expressed as
$$
\bar{Y}_{i,j}=\mathrm{i}\bar{B}_{i,j},
$$
where \(\bar{B}_{i,j}\in\R\) denotes the corresponding susceptance. Reciprocity means that the transmission behavior between any two ports is identical in both directions, thereby avoiding the need for complicated nonreciprocal elements. Accordingly, the inter-port susceptances satisfy
$$
\bar{B}_{i,j}=\bar{B}_{j,i}, \quad \forall~ i\neq j.
$$
The admittance matrix can then be expressed as 
\begin{equation}\label{Eq:YB}
\mathbf{Y}=\mathrm{i}\mathbf{B}, 
\end{equation}
where \(\mathbf{B}\in\mathbb{R}^{(N+K)\times(N+K)}\) is a real symmetric susceptance matrix.

\subsubsection{Architecture-induced sparsity constraint}\label{subsubsec:arch}
In general, a MiLAC architecture specifies which pairs of ports are physically interconnected through tunable admittance elements. Hence, different architectures can be represented by different sparsity patterns on the susceptances $\{\bar{B}_{i,j}\}$. Specifically, let \(\mathcal D\) denote the set of disconnected port pairs, i.e.,
\begin{equation}\label{def:calA}
(i,j)\in\mathcal D~ \Longleftrightarrow ~
\text{ports } i \text{ and } j \text{ are disconnected}, ~ i\neq j,
\end{equation}
then
\[
\bar B_{i,j}=0,\quad \forall~ (i,j)\in\mathcal D.
\]
Due to the reciprocity of the network, $(i,j)\in\mathcal{D}$ if and only if $(j,i)\in\mathcal{D}$. 
In contrast, if \((i,j)\notin\mathcal D\) and \(i\neq j\), ports \(i\) and \(j\) are connected by a tunable admittance.

The most common MiLAC architecture studied in the literature is the fully-connected MiLAC, where every pair of ports is interconnected, and thus  \(\mathcal D=\emptyset\). An example is given by  Fig. \ref{fig:MiLAC}. 
To reduce the circuit complexity, a novel stem-connected architecture has been proposed in \cite{MIMOcapacity2}. In a stem-connected MiLAC, a subset of ports is selected as central ports. These central ports are connected to all other ports, while the remaining non-central ports are only connected to the central ports and have no interconnections among themselves. Hence,
\[
\mathcal D=\{(i,j)\mid  i \text{ and } j \text{ are both non-central ports},~i\neq j\}.
\]
It has been shown in \cite{MIMOcapacity2} that, under the lossless and reciprocal constraints with continuously tunable susceptances, a stem-connected MiLAC with \(2K-1\) stem ports achieves the performance of the fully-connected MiLAC. Throughout this paper, we assume that at least one input port is  connected to an output port. Otherwise, no signal can be transferred from the RF-chain side to the antenna side. 
\subsubsection{Finite-resolution constraint}
The above architectures are usually studied under the assumption that the available susceptances \(\{\bar B_{i,j}\}\) are continuously tunable. In practice, however, these susceptances  can only take values from a finite discrete set. Motivated by this practical limitation, we  consider a finite-resolution setting in which each tunable susceptance is selected from a finite codebook \(\mathcal C\). Specifically, we adopt a \(b\)-bit codebook symmetric around zero with uniform quantization, given by
$$\mathcal{C}=\{\pm c,\pm 3c,\dots,\pm (2^{b}-1)c\},$$ 
where $c$ denotes half of the quantization stepsize.
\subsection{Problem Formulation}
In this paper, the MiLAC architecture is assumed to be specified a priori and is not treated as the optimization variable. Our focus is to optimize the tunable susceptances and the finite-resolution codebook under a given architecture.

The goal is twofold.  First, for a given finite-resolution codebook $\mathcal{C}$, the MiLAC susceptances $\{\bar{B}_{i,j}\}$ and the power allocation vector $\mathbf{p}$ are jointly optimized online according to the instantaneous channel realization. Second, the quantization parameter $c$, which determines the codebook, is designed offline based on the channel statistics. In the following, we first formulate the online beamforming problem for a fixed $c$, and then present the corresponding offline design problem for $c$.

\emph{1) Online Problem: Finite-Resolution MiLAC Beamforming.} 
For notational simplicity, define $\bF=2\bF_{\text{MiLAC}}$. We use the sum-rate as the performance metric, which, according to \eqref{eq:system}, reads $$R(\mathbf{p},\bF; \bH)=\sum_{k=1}^K\log\left(1+\frac{|\h_k^H\mathbf{f}_k|^2p_k}{\sum_{j\neq k}|\h_k^H\mathbf{f}_j|^2p_j+\sigma^2}\right),$$
where $\mathbf{f}_k$ denotes the $k$-th column of $\mathbf{F}$ and $\sigma^2=4\sigma_0^2$ absorbs the factor of $2$ between $\bF$ and $\bF_{\text{MiLAC}}$ into the noise variance. Here, the notation $R(\mathbf{p},\bF;\bH)$ is used to emphasize that the achievable sum-rate is a function of the beamforming matrix $\bF$ and the power allocation vector $\mathbf{p}$, under a given channel realization $\bH=[\h_1,\h_2,\dots,\h_K]^H\in\C^{K\times N}$. 

For a given MiLAC architecture, let $\mathcal D$ denote the set of disconnected port pairs as defined in \eqref{def:calA}. We further define
$$\mathcal E\triangleq
\{(i,j)\mid 1\leq i, j\leq N+K\}\setminus \mathcal D,$$
which collects all self-pairs and connected port pairs, whose corresponding susceptances $\bar B_{i,j}$ are tunable. Given the channel realization \(\mathbf H\) and a fixed quantization parameter \(c\), the online finite-resolution MiLAC beamforming problem is formulated as

  \begin{subequations}\label{problem:sumrate}
\begin{align}
R^\star(c;\bH)~~~~~&\notag\\
\triangleq\max_{\mathbf{p},\bF,\bthe,\bB,\bar{\bB}}~&R(\mathbf{p},\bF;\bH)\\
\text{s.t.}~~~~~ &\bF=[\bthe]_{K+1:K+N,1:K}, \label{con:F}\\&
\bthe=\left(\mathbf{I}_{N+K}+\mathrm{i}Z_0\mathbf{B}\right)^{-1}\left(\mathbf{I}_{N+K}-\mathrm{i}Z_0\mathbf{B}\right),\label{con:theta_B}\\
~~&[\bB]_{i,j}=\left\{
\begin{aligned}
&-\bar{B}_{i,j},~~~~~~~~~\,~~&\text{if}~i\neq j;\\
&\bar{B}_{i,i}+\textstyle\sum_{n\neq i}\bar{B}_{i,n},~~&\text{if }i=j.
\end{aligned}\right.\label{con:B_barB}\\
&\bar{B}_{i,j}=0,~\forall~(i,j)\in {\cal D},\label{con:arch}
\\&\bar{B}_{i,j}\in\mathcal{C},~\forall~(i,j)\in {\cal E},\label{con:discrete}\\&\bar{B}_{i,j}=\bar{B}_{j,i},~\forall~(i,j)\in{\cal E},\label{con:reciprocal}\\
&\mathbf{1}^T\mathbf{p}\leq P_T,~\mathbf{p}\geq \mathbf{0}.\label{con:power}
\end{align}
\end{subequations}
In the above,  \eqref{con:F} follows from \eqref{F:theta} and the definition $\bF=2\bF_{\mathrm{MiLAC}}$, \eqref{con:theta_B}  is obtained from the relationship between the scattering matrix $\bthe$ and the admittance matrix $\bY$ in  \eqref{Eq:thetaY} and the lossless constraint that yields \eqref{Eq:YB}, \eqref{con:B_barB} follows from the construction of the susceptance matrix $\mathbf B$ as specified  in \eqref{admittance}, \eqref{con:arch} --  \eqref{con:reciprocal}  
impose the architecture-induced sparsity constraint,
the finite-resolution codebook constraint, and the reciprocity constraint, on the susceptances $\{\bar{B}_{i,j}\}$, respectively,
 and \eqref{con:power} represents the total transmit power constraint, where $P_T$ denotes the maximum transmit power.

The optimal sum-rate is  a function of the quantization parameter $c$ and the channel matrix $\bH$, thus written as $R^\star(c;\bH)$.  This problem describes the online optimization stage, where the codebook is fixed while the MiLAC susceptances and the power allocation are adapted to the instantaneous channel.

\emph{2) Offline Problem: Codebook  Design. } 
The quantization parameter $c$ determines the available susceptance levels in the finite-resolution codebook. Since $c$ is a hardware-related parameter, it is designed offline based on the channel distribution as
\begin{equation}\label{problem:codebook}
\max_{c\geq 0} ~\mathbb{E}_\bH[R^\star(c;\bH)].
\end{equation} 

In the following Sections \ref{sec:online} and \ref{sec:offline}, we solve the online problem in \eqref{problem:sumrate} and the offline problem in \eqref{problem:codebook}, respectively. 

\section{Algorithm for Solving the Online Sum-Rate Maximization Problem}\label{sec:online}
 Problem \eqref{problem:sumrate} is a challenging non-convex optimization problem with  mixed discrete and continuous variables. Since the number of tunable susceptances is typically large, exhaustive search over all possible discrete configurations is infeasible. In this section, we develop an efficient algorithm for solving the online problem in \eqref{problem:sumrate}.

First, in Section \ref{subsec:reformulation}, we reformulate problem \eqref{problem:sumrate} into a more tractable form by introducing appropriate auxiliary variables. Second, in Section \ref{subsec:penalty}, we develop a continuous penalty model of the discrete problem and show that, with a sufficiently large penalty parameter, it is globally equivalent to the original discrete problem. Finally, we propose an ADMM-based algorithm to efficiently solve the resulting continuous problem in Section \ref{subsec:ADMM}.
\subsection{Reformulation of \eqref{problem:sumrate}}\label{subsec:reformulation}
We reformulate problem \eqref{problem:sumrate} in two steps. First, we eliminate the matrix inverse constraint by introducing an auxiliary beamforming variable. Second, we exploit the odd-uniform structure of the finite-resolution codebook to represent each discrete susceptance using several two-level variables.

To begin, define two selection matrices as 
 $$\bar{\mathbf{I}}_1=\left[\begin{matrix} \mathbf{I}_K\\\mathbf{0}_{N\times K}\end{matrix}\right]~\text{and }~\bar{\mathbf{I}}_2=[\mathbf{0}_{N\times K}~~\mathbf{I}_N].$$ Then, constraints  \eqref{con:F}  and \eqref{con:theta_B} can be rewritten as 
 \begin{equation}\label{con:BF}
 \bar{\mathbf{I}}_2\left(\mathbf{I}_{N+K}+\mathrm{i}Z_0\mathbf{B}\right)^{-1}\left(\mathbf{I}_{N+K}-\mathrm{i}Z_0\mathbf{B}\right)\bar{\mathbf{I}}_1=\bF, 
 \end{equation}
 which eliminates the scattering matrix $\bthe$. However, the resulting constraint still contains a matrix inverse. To remove this inverse, we introduce an auxiliary variable  $\bF_c\in\C^{K\times K}$ and define $$\bar{\bF}=\left[\begin{matrix} \bF_c\\\bF\end{matrix}\right]\in\C^{(N+K)\times K}.$$
Then $\bF$ can be expressed as  $\bF=\bar{\mathbf{I}}_2\bar{\bF}$, and  the sum-rate can be written as a function of $\bar{\bF}$ and $\mathbf{p}$: 
$$R(\mathbf{p},\bar{\bF}; \bH)=\sum_{k=1}^K\log_2\left(1+\frac{|\h_k^H\bar{\mathbf{I}}_2\bar{\mathbf{f}}_k|^2p_k}{\sum_{j\neq k}|\h_k^H\bar{\mathbf{I}}_2\bar{\mathbf{f}}_j|^2p_j+\sigma^2}\right).$$
Moreover, by the definition of $\bar{\bF}$, \eqref{con:BF} can be represented as  
$$\left(\mathbf{I}_{N+K}+\mathrm{i}Z_0\mathbf{B}\right)^{-1}\left(\mathbf{I}_{N+K}-\mathrm{i}Z_0\mathbf{B}\right)\bar{\mathbf{I}}_1=\bar{\bF},$$
or equivalently,
 $$\mathrm{i}Z_0\bB(\bar{\bF}+\bar{\mathbf{I}}_1)=\bar{\mathbf{I}}_1-\bar{\bF}.$$
This reformulation removes the matrix inverse and leads to a bilinear constraint in $\bar{\bF}$ and $\bB$, which is more convenient for algorithm design.

We next reformulate the finite-resolution constraint in \eqref{con:discrete}. By exploiting the odd-uniform structure of the codebook $\mathcal{C}$, each feasible susceptance value $\bar{B}_{i,j}\in\mathcal{C}$ can be represented as a weighted sum of $b$ two-level variables:
\[
\bar{B}_{i,j}= \sum_{\ell=1}^b 2^{\ell-1} \bar{B}_{i,j}^{(\ell)},~ \bar{B}_{i,j}^{(\ell)}\in\{-c,c\},~\ell=1,2,\dots, b.
\]
Therefore, the multi-level finite-resolution constraint is equivalently replaced by a set of two-level discrete constraints. This representation leads to a more structured formulation and facilitates the development of the penalty-based algorithm in the following subsection.

Using the above transformations, problem \eqref{problem:sumrate} can be equivalently rewritten as
  \begin{subequations}\label{problem2:sumrate}
\begin{align}
\max_{\mathbf{p},\bar{\bF},\bB,\atop\{\bar{B}_{i,j}\},\{\bar{B}_{i,j}^{(\ell)}\}}
&R(\mathbf{p},\bar{\bF};\bH)\\
\text{s.t.}~~~~~~ &\mathrm{i}Z_0\bB(\bar{\bF}+\bar{\mathbf{I}}_1)=\bar{\mathbf{I}}_1-\bar{\bF},\label{con:BbarF2}\\
~~&[\bB]_{i,j}=\left\{
\begin{aligned}
&-\bar{B}_{i,j},~~~~~~~~~\,~&\text{if}~i\neq j;\\
&\bar{B}_{i,i}+\textstyle\sum_{n\neq i}\bar{B}_{i,n},~&\text{if }i=j.
\end{aligned}\right. \label{con:BbarB2}\\
&\bar{B}_{i,j}=0,~\forall~(i,j)\in\mathcal{D},\\
&\bar{B}_{i,j}= \sum_{\ell=1}^b 2^{\ell-1} \bar{B}_{i,j}^{(\ell)},~\forall~(i,j)\in\mathcal{E},\label{con:BBl2}\\
~&\bar{B}_{i,j}^{(\ell)}\in\{-c,c\},~\forall~(i,j)\in \mathcal{E},~\ell=1,\dots,b, \label{con:binary}\\
&\bar{B}_{j,i}^{(\ell)}=\bar{B}_{i,j}^{(\ell)},~\forall~(i,j)\in \mathcal{E},~\ell=1,\dots,b,\label{con:reciprocal2}\\
&\mathbf{1}^T\mathbf{p}\leq P_T,~\mathbf{p}\geq \mathbf{0}\label{con:power2}.
\end{align}
\end{subequations}
Compared with the original formulation in \eqref{problem:sumrate}, the reformulated problem eliminates the matrix inverse constraint by introducing the auxiliary beamforming variable $\bar{\bF}$. In addition, the finite-resolution constraint is equivalently represented by a set of two-level discrete constraints on $\{\bar{B}_{i,j}^{(\ell)}\}$. In the following subsection, we further develop a continuous penalty model to handle these two-level constraints.

\subsection{Penalty Model of Problem \eqref{problem2:sumrate}}\label{subsec:penalty}
In this subsection, we develop a continuous penalty model to deal with the two-level discrete constraints in \eqref{con:binary}. 

For each \((i,j)\in\mathcal E\) and \(\ell=1,\ldots,b\),  we first relax the two-level  constraint to its convex hull, i.e.,
$
\bar B_{i,j}^{(\ell)}\in[-c,c],
$
and then promote the two-level structure through a penalty term. 
 Specifically, define the penalty function as
\begin{equation}\label{eq:penalty}
 P(\bar{B}_{i,j}^{(\ell)})=\left(\bar B_{i,j}^{(\ell)}\right)^2-c^2.%
\end{equation}
 Under the box constraint \(\bar B_{i,j}^{(\ell)}\in[-c,c]\), we have
\[
  P(\bar{B}_{i,j}^{(\ell)})\leq 0,
\]
where equality holds if and only if
\[
\bar B_{i,j}^{(\ell)}\in\{-c,c\},
\quad \forall (i,j)\in\mathcal E,\ \ell=1,\ldots,b.
\]
Therefore, maximizing $P(\bar{B}_{i,j}^{(\ell)})$ drives the relaxed variable toward the desired two-level discrete set  $\{-c,c\}$. 

Collecting the penalty terms for all two-level variables, we define
$$P(\{\bar B_{i,j}^{(\ell)}\})=\sum_{\ell=1}^{b}\sum_{\substack{(i,j)\in\mathcal E, j\geq i}}
P(\bar B_{i,j}^{(\ell)}),$$ 
where the penalty is applied only to the upper-triangular entries with $j\geq i$ due to the symmetry constraint in \eqref{con:reciprocal2}. The resulting continuous penalty model is formulated as  
\begin{subequations}\label{problem2:sumrate-penalty}
\begin{align}
\max_{\mathbf{p},\bar{\bF},\bB,\atop\{\bar{B}_{i,j}\},\{\bar{B}_{i,j}^{(\ell)}\}}
&R(\mathbf{p},\bar{\bF};\bH)+\frac{\rho}{2} P(\{\bar B_{i,j}^{(\ell)}\})\\
\text{s.t.}~~~~~ &\eqref{con:BbarF2}-\eqref{con:BBl2},\, \eqref{con:reciprocal2}-\eqref{con:power2},\\
~&\bar{B}_{i,j}^{(\ell)}\in[-c,c],~\forall~(i,j)\in \mathcal{E},~\ell=1,\dots,b.  \label{con:box}
\end{align}
\end{subequations}
In \eqref{problem2:sumrate-penalty}, the  two-level discrete constraints are relaxed to box constraints, and an additional penalty term is added to the objective function to promote discreteness, where $\rho>0$ is the penalty parameter.

We have transformed the original discrete problem in \eqref{problem2:sumrate} into the penalized continuous problem in \eqref{problem2:sumrate-penalty}. A natural question is whether the proposed penalty model is exact, namely, whether the penalized continuous problem becomes equivalent to the original discrete problem when the penalty parameter is sufficiently large. The following theorem gives an affirmative answer.

\begin{theorem}\label{equivalence}
There exists a constant $\bar{\rho}>0$ such that, for any $\rho>\bar{\rho}$, problems  \eqref{problem2:sumrate} and \eqref{problem2:sumrate-penalty} are globally equivalent. Specifically, any optimal solution to problem \eqref{problem2:sumrate} is also optimal to problem \eqref{problem2:sumrate-penalty}. Conversely, any optimal solution to problem \eqref{problem2:sumrate-penalty} satisfies the two-level discrete constraints in \eqref{con:binary}, and is also optimal to problem \eqref{problem2:sumrate}.
\end{theorem}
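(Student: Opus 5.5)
The plan is to eliminate the auxiliary variables, show that the objective is Lipschitz continuous in the two-level variables uniformly in $\mathbf p$, and then use a coordinatewise rounding argument in which the penalty gain outweighs the rate loss once $\rho$ is large.

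\emph{Step 1 (reduced problem).} For any real symmetric $\bB$, the matrix $\mathbf I+\mathrm iZ_0\bB$ is invertible. Its eigenvalues are $1+\mathrm iZ_0\lambda$ with $\lambda\in\R$, so $\|(\mathbf I+\mathrm iZ_0\bB)^{-1}\|_2\le 1$. Hence \eqref{con:BbarF2} determines $\bar\bF$ uniquely as $\bar\bF=\bthe\bar{\mathbf I}_1$, where $\bthe=(\mathbf I+\mathrm iZ_0\bB)^{-1}(\mathbf I-\mathrm iZ_0\bB)$ is unitary. Through \eqref{con:BbarB2}, \eqref{con:BBl2} and \eqref{con:reciprocal2}, $\bB$ is a linear function of the vector $\mathbf x$ that collects the upper-triangular two-level variables $\bar B^{(\ell)}_{i,j}$, $(i,j)\in\mathcal E$, $j\ge i$. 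Both problems therefore reduce to optimizing $g(\mathbf x,\mathbf p)\triangleq R(\mathbf p,\bar\bF(\mathbf x);\bH)$ over $\mathbf p$ in the power set. The only difference is the feasible set: $\mathbf x\in\{-c,c\}^M$ for \eqref{problem2:sumrate}, and $\mathbf x\in[-c,c]^M$ with the added term $\frac{\rho}{2}\sum_m(x_m^2-c^2)$ for \eqref{problem2:sumrate-penalty}.

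\emph{Step 2 (uniform Lipschitz bound; main obstacle).} I will show there is $L>0$ such that $|g(\mathbf x,\mathbf p)-g(\mathbf x',\mathbf p)|\le L\|\mathbf x-\mathbf x'\|_1$ for all $\mathbf x,\mathbf x'\in[-c,c]^M$ and all feasible $\mathbf p$. The argument has two parts.
\begin{itemize}
\item \emph{Dependence of $\bthe$ on $\bB$.} Differentiating gives $\mathrm d\bthe=-2\mathrm iZ_0(\mathbf I+\mathrm iZ_0\bB)^{-1}\,\mathrm d\bB\,(\mathbf I+\mathrm iZ_0\bB)^{-1}$. Its norm is therefore at most $2Z_0\|\mathrm d\bB\|$, and $\|\mathrm d\bB\|$ is bounded by a constant times $\|\mathrm d\mathbf x\|_1$ because the map $\mathbf x\mapsto\bB$ is linear.
\item \emph{Dependence of $R$ on the beamformer.} Since $\bthe$ is unitary, every $\mathbf f_k$ has $\|\mathbf f_k\|_2\le1$. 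With $0\le p_k\le P_T$ and every interference-plus-noise term at least $\sigma^2>0$, the gradient of $R$ with respect to $\bF$ is bounded by a constant depending only on $\bH$, $P_T$ and $\sigma^2$.
\end{itemize}
The mean value theorem along the segment from $\mathbf x$ to $\mathbf x'$, with $\mathbf p$ held fixed, then yields $L$. I expect this step to be the main obstacle: the bound must hold uniformly in $\mathbf p$, and the resolvent bound is what prevents any blow-up.

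\emph{Step 3 (rounding argument).} Set $\bar\rho=2L/c$ and fix $\rho>\bar\rho$. Take any feasible $(\mathbf x,\mathbf p)$ of \eqref{problem2:sumrate-penalty} with some $|x_m|<c$. Replace $x_m$ by $\mathrm{sign}(x_m)\,c$, choosing $+c$ if $x_m=0$, and let $d=c-|x_m|>0$.
\begin{itemize}
\item By Step 2, the rate decreases by at most $Ld$.
\item The penalty term increases by $\frac{\rho}{2}(c^2-x_m^2)=\frac{\rho}{2}\,d\,(c+|x_m|)\ge\frac{\rho}{2}cd>Ld$.
\end{itemize}
So each such rounding strictly increases the penalized objective. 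Rounding all non-discrete coordinates one by one produces a point of $\{-c,c\}^M$ with strictly larger penalized value. Consequently, every optimum of \eqref{problem2:sumrate-penalty} (one exists, since the feasible set is compact and the objective is continuous) satisfies \eqref{con:binary}.

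\emph{Step 4 (equivalence).} On $\{-c,c\}^M$ the penalty vanishes, so the two objectives coincide there. Let $v_d$ and $v_p$ denote the optimal values of \eqref{problem2:sumrate} and \eqref{problem2:sumrate-penalty}.
\begin{itemize}
\item Since every discrete feasible point is feasible for the penalty model, $v_p\ge v_d$.
\item By Step 3, the optimal value $v_p$ is attained at a discrete point, so $v_p\le v_d$.
\end{itemize}
Hence $v_p=v_d$. Any optimum of \eqref{problem2:sumrate} attains $v_d=v_p$ in the penalty model and is therefore optimal there. Conversely, any optimum of \eqref{problem2:sumrate-penalty} is discrete by Step 3 and attains $v_p=v_d$, so it is optimal for \eqref{problem2:sumrate}. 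This proves the theorem.
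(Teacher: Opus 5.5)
Your proof is correct and follows essentially the same route as the paper. You eliminate $\bar{\mathbf F}$ through the Cayley-type map $\bthe$, obtain a Lipschitz bound from $\|(\mathbf I+\mathrm{i}Z_0\mathbf B)^{-1}\|_2\le 1$ and the unitarity of $\bthe$, and then show that rounding onto $\{-c,c\}$ gains more penalty than it loses in rate. The differences are only cosmetic: you keep $\mathbf p$ fixed with a $\mathbf p$-uniform $\ell_1$ Lipschitz bound and round one coordinate at a time, whereas the paper first maximizes over $\mathbf p$ (Lemma~\ref{lemma1}(ii)) and rounds all coordinates simultaneously via $\|\hat{\mathbf X}-\mathbf X^\star\|_F\le\sum_{i,j}d_{i,j}$; your threshold $2L/c$ also correctly accounts for the factor $\rho/2$ in \eqref{problem2:sumrate-penalty}.
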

\begin{proof}
See Appendix \ref{proof:equivalence}.
\end{proof}
\begin{remark}
The quadratic penalty term in \eqref{eq:penalty} has also been adopted in existing works to promote discreteness \cite{onebit_shao,onebit_wu}. However, the exactness of the corresponding penalty model is generally problem dependent and requires a case-by-case analysis. For example, the problems considered in \cite{onebit_shao} involve only discrete constraints, while the problem in \cite{onebit_wu} contains only linear constraints in addition to the discrete constraints. In contrast, in the considered MiLAC beamforming problem, the two-level discrete variables are coupled through the constraints in \eqref{con:BbarF2}-\eqref{con:BBl2}, including the nonlinear bilinear constraint \eqref{con:BbarF2}.  Therefore, the considered problem is more complicated than the models in  \cite{onebit_shao,onebit_wu}, and  the exactness of the proposed penalty model requires a dedicated proof. Theorem \ref{equivalence} establishes this exactness by carefully exploiting the specific structure of the problem constraints.
\end{remark}

For the considered problem, it is difficult to obtain the explicit value of the threshold $\bar{\rho}$ in Theorem \ref{equivalence}. Directly setting the penalty parameter $\rho$ to a very large value may lead to poor local solutions. Therefore, we adopt a continuation strategy \cite{onebit_shao,onebit_wu} in practice. The algorithm starts from a moderate penalty parameter $\rho_0$ and gradually increases it until a feasible discrete solution is obtained. The continuation procedure is summarized in Algorithm \ref{alg:rho_update}.  To improve the numerical performance, we also construct a feasible discrete candidate after each penalty stage by rounding the relaxed variables $\{\bar{B}_{i,j}\}$ to the nearest points in $\mathcal{C}$.  The sum-rate objective is then evaluated. Among all discrete candidates generated during the continuation process, we select the one that achieves the largest sum-rate as the final output.

In the following subsection, we propose an efficient algorithm for solving the continuous penalty model in \eqref{problem2:sumrate-penalty}. 
\begin{algorithm}[t] 
\caption{Continuation Framework for Solving Problem \eqref{problem2:sumrate}} 
\label{alg:rho_update}
 \begin{algorithmic}[1] 
 \State \textbf{Input:} Initial penalty parameter $\rho_0$ and penalty growth factor $\tau>1$. 
\State Initialize $\rho=\rho_0$ and set $R_{\rm best}=-\infty$. 
\While{the discrete constraint in \eqref{con:binary} is not satisfied}
 \State Solve the continuous penalty problem \eqref{problem2:sumrate-penalty} with fixed $\rho$ using Algorithm \ref{alg:online_admm} developed in Section \ref{subsec:ADMM}.
  \State Round $\{\bar{B}_{i,j}\}_{(i,j)\in\mathcal{E}}$ to the discrete set $\mathcal{C}$ and construct $\mathbf B_{\rm rd}$ by \eqref{con:BbarB2}. 
\State Set $\bar{\mathbf F}_{\rm rd}=(\mathbf I+\mathrm{i}Z_0\mathbf B_{\rm rd})^{-1}(\mathbf I-\mathrm{i}Z_0\mathbf B_{\rm rd})\bar{\mathbf I}_1$.
 \State With $\bar{\mathbf F}_{\rm rd}$ fixed, refine $\mathbf p_{\rm rd}$  by iteratively applying the updates in \eqref{update:gamma}-\eqref{update:pk} of Section \ref{subsec:ADMM} until convergence. 
 \State Set $R_{\rm rd}=R(\mathbf p_{\rm rd},\bar{\mathbf F}_{\rm rd};\mathbf H)$.
 \If{$R_{\rm rd}>R_{\rm best}$} \State Store the rounded discrete solution. \State Set $R_{\rm best}=R_{\rm rd}$. 
 \EndIf
  \State Increase the penalty parameter: $\rho\leftarrow\tau\rho$.  
 \State Use the current solution as the initialization for the next penalty stage.
  \EndWhile
 \State \textbf{Output}: The stored discrete solution that achieves $R_{\rm best}$.
  \end{algorithmic}
   \end{algorithm}

\subsection{Algorithm for Solving the Penalty Model in \eqref{problem2:sumrate-penalty}}\label{subsec:ADMM}
This subsection focuses on solving the continuous penalty model in \eqref{problem2:sumrate-penalty} for a given penalty parameter $\rho$. The main challenges of problem \eqref{problem2:sumrate-penalty} lie in the non-convex sum-rate objective function and the bilinear constraint in \eqref{con:BbarF2}, which couples variables $\bar{\mathbf F}$ and ${\mathbf B}$.  The other constraints are all linear and can be handled more directly.  To address these challenges, we first transform the objective function using the fractional programming (FP) technique \cite{FP,FP2}, and then develop an ADMM-based algorithm to handle the bilinear constraint.


We now detail the proposed algorithm. Applying the fractional programming technique in \cite{FP,FP2}, the sum rate  objective function can be equivalently transformed to 
$$\max_{\boldsymbol{\gamma},\y} \tilde{R}(\boldsymbol{\gamma},\y,\mathbf{p},\bar{\bF};\bH),$$
where 
$$
\begin{aligned}
& \tilde{R}(\boldsymbol{\gamma},\y,\mathbf{p},\bar{\bF};\bH)\\
&=\sum_{k=1}^K\log(1+\gamma_k)-\gamma_k+(1+\gamma_k)\bigg(2\mathcal{R}(y_k^*\h_k^H\bar{\mathbf{I}}_2\bar{\mathbf{f}}_k)\sqrt{p_k}\\
&~~~-|y_k|^2\big(\sum_{j=1}^K|\h_k^H\bar{\mathbf{I}}_2\bar{\mathbf{f}}_j|^2p_j+\sigma^2\big)\bigg).
\end{aligned}
$$
The advantage of this reformulation is that, by introducing the auxiliary variables  $(\boldsymbol \gamma,\y)$, the objective function becomes separately concave in each variable block, which   facilitates efficient block-wise optimization. To handle the bilinear constraint in \eqref{con:BbarF2}, we further adopt an ADMM framework. Specifically, the augmented Lagrangian function is given by 
$$
\begin{aligned}L(\mathcal{V})=& \tilde{R}(\boldsymbol{\gamma},\y,\mathbf{p},\bar{\bF};\bH)-\left<\boldsymbol{\Lambda},\mathrm{i}Z_0\bB(\bar{\bF}+\bar{\mathbf{I}}_1)-(\bar{\mathbf{I}}_1-\bar{\bF})\right>\\
&-\frac{\mu}{2}\left\|\mathrm{i}Z_0\bB(\bar{\bF}+\bar{\mathbf{I}}_1)-(\bar{\mathbf{I}}_1-\bar{\bF})\right\|_F^2+\frac{\rho}{2} P(\{\bar{B}_{i,j}^{(\ell)}\}),
\end{aligned}$$
where $\mathcal{V}:=\{\boldsymbol{\gamma},\y,\mathbf{p},\bar{\bF}, \bB,\{\bar{B}_{i,j}\},\{\bar{B}_{i,j}^{(\ell)}\}\}$ collects all the optimization variables,  $\boldsymbol{\Lambda}\in\C^{(N+K)\times K}$ is the Lagrange multiplier associated with \eqref{con:BbarF2}, and $\mu>0$ is the corresponding penalty parameter. In our ADMM framework, we partition the optimization variables into four blocks: $(\boldsymbol{\gamma},\y), \mathbf{p},\bar{\bF}, (\bB,\{\bar{B}_{i,j}\},\{\bar{B}_{i,j}^{(\ell)}\})$. We next detail the update of each variable block.   To simplify the notation, the iteration index is omitted in the following derivation of the block updates. All variables not being optimized in the current subproblem are fixed at their most recently updated values.

\subsubsection{Update of $(\boldsymbol{\gamma},\y)$}  \label{update:gammay}
When $(\mathbf{p},\bar{\bF}, \bB,\{\bar{B}_{i,j}\},\{\bar{B}_{i,j}^{(\ell)}\})$ are fixed,  the variables $(\boldsymbol{\gamma},\y)$ are updated by maximizing $L(\mathcal{V})$ with respect to $(\boldsymbol{\gamma},\y)$. Following the standard FP updates, the optimal $\gamma_k$ and $y_k$ are given by
 \begin{equation}\label{update:gamma}
\gamma_k=\frac{|\h_k^H\bar{\mathbf{I}}_2\bar{\mathbf{f}}_k|^2p_k}{\sum_{j\neq k}|\h_k^H\bar{\mathbf{I}}_2\bar{\mathbf{f}}_j|^2p_j+\sigma^2},~ k=1,2,\dots, K,
\end{equation} and 
\begin{equation}\label{update:y}
y_k=\frac{\h_k^H\bar{\mathbf{I}}_2\bar{\mathbf{f}}_k\sqrt{p_k}}{\sum_{j=1}^K|\h_k^H\bar{\mathbf{I}}_2\bar{\mathbf{f}}_j|^2p_j+\sigma^2},~ k=1,2,\dots, K.
\end{equation}
\subsubsection{Update of $\mathbf{p}$}\label{update:p}
 When $(\boldsymbol{\gamma},\y, \bar{\bF}, \bB,\{\bar{B}_{i,j}\},\{\bar{B}_{i,j}^{(\ell)}\})$ are fixed, maximizing  $L(\mathcal{V})$ over $\mathbf{p}$ under the power constraint in \eqref{con:power2} yields the following problem:
\begin{equation}\label{p-subproblem}
\begin{aligned}
\max_{\mathbf{p}}~&\sum_{k=1}^K(2\alpha_k\sqrt{p_k}-\beta_kp_k),~\\
\text{s.t. }~&\mathbf{1}^T\mathbf{p}\leq P_T,~ \mathbf{p}\geq \mathbf{0},
\end{aligned}
\end{equation}
where $\alpha_k=(1+\gamma_k)\RR(y_k^*\h_k^H\bar{\mathbf{I}}_2\bar{\mathbf{f}}_k)\geq 0$ and $\beta_k=\sum_{j=1}^K(1+\gamma_j)|y_j|^2|\h_j^H\bar{\mathbf{I}}_2\bar{\mathbf{f}}_k|^2\geq0$.
The solution to \eqref{p-subproblem} can be derived from its Karush–Kuhn–Tucker (KKT) conditions. Let $\lambda$ be the Lagrange multiplier associated with $\mathbf{1}^T\mathbf{p}\leq P_T$. The KKT condition yields  
\begin{equation}\label{update:pk}
\begin{aligned}
p_k=\frac{\alpha_k^2}{(\beta_k+\lambda)^2},~~k=1,2,\dots,K,
\end{aligned}
\end{equation}
where $\lambda$ and $\mathbf{p}$ satisfy the following complementary slackness condition and  feasibility conditions:
\begin{subequations}\label{kkt:pk}
\begin{align}
&\lambda(\mathbf{1}^T\mathbf{p}-P_T)=0, \label{complementary}\\
&\mathbf{1}^T\mathbf{p}\leq P_T,~{\lambda}\geq {0}.\label{feasibility}
\end{align}
\end{subequations} 
Combining \eqref{update:pk} and \eqref{kkt:pk}, we can conclude that $\lambda=0$ if $\sum_{k=1}^K\alpha_k^2/\beta_k^2\leq P_T$. Otherwise, $\lambda$ is the unique positive solution to
 $\sum_{k=1}^K{\alpha_k^2}/{(\beta_k+\lambda)^2}=P_T.$

\subsubsection{Update of $\bar{\bF}$} \label{update:barF}
When $(\boldsymbol{\gamma},\y,\mathbf{p}, \bB,\{\bar{B}_{i,j}\},\{\bar{B}_{i,j}^{(\ell)}\})$ are fixed, maximizing  $L(\mathcal{V})$ with respect to $\bar{\bF}$ is equivalent to solving   
\begin{equation}\label{subproblemY}
\begin{aligned}
\min_{\bar{\bF}}~&\text{tr}\big(\text{diag}(\mathbf{p})(\bar{\bF})^H\bQ\bar{\bF}\big)-2\RR\big(\text{tr}(\text{diag}(\mathbf{p})^{\frac{1}{2}}\mathbf{L}\bar{\bF})\big)\\
&+\frac{\mu}{2}\left\|(\mathbf{I}+\mathrm{i}Z_0\bB)\bar{\bF}-(\mathbf{I}-\mathrm{i}Z_0\bB)\bar{\mathbf{I}}_1+{\boldsymbol{\Lambda}}/{\mu}\right\|_F^2
\end{aligned}
\end{equation}
where $\bQ=\bar{\mathbf{I}}_2^H{\bH}^H\text{diag}(({\mathbf{1}+\boldsymbol{\gamma}})\circ\y\circ({\y})^*){\bH}\bar{\mathbf{I}}_2$ and $\mathbf{L}=\text{diag}((\mathbf{1}+\boldsymbol{\gamma})\circ(\y)^*){\bH}\bar{\mathbf{I}}_2$. 
Hence, $\bar{\bF}$ is updated column-wise as 
\begin{equation}\label{update:f}
\begin{aligned}
        \bar{\mathbf{f}}_k=(p_k&\bQ+\frac{\mu}{2}(\mathbf{I}+Z_0^2\bB^2))^{-1}\big(\sqrt{p_k}[\mathbf{L}]_{k,:}^H-\frac{1}{2}(\mathbf{I}-\mathrm{i}Z_0\bB)\blam_k\\
        &+\frac{\mu}{2}(\mathbf{I}-Z_0^2{\bB}^2-\mathrm{i}2Z_0\bB)[\bar{\mathbf{I}}_{1}]_{:,k}\big),~k=1,2,\dots, K,\\
\end{aligned}
\end{equation}
where $\bar{\mathbf{f}}_k$ is the $k$-th column of $\bar{\mathbf{F}}$ and $\boldsymbol \lambda_k$ denotes the $k$-th column of $\boldsymbol \Lambda$.

\subsubsection{Update of $(\bB,\{\bar{B}_{i,j}\},\{\bar{B}_{i,j}^{(\ell)}\})$}\label{update:B}
 When $(\boldsymbol{\gamma},\y, \mathbf{p}, \bar{\bF},\boldsymbol{\Lambda})$ are fixed, maximizing  $L(\mathcal{V})$ over $(\bB,\{\bar{B}_{i,j}\},\{\bar{B}_{i,j}^{(\ell)}\})$ under the corresponding constraints in \eqref{con:BbarB2}-\eqref{con:BBl2},  \eqref{con:reciprocal2},  and \eqref{con:box} yields the following problem:
\begin{subequations}\label{updateB}
\begin{align}
\min~~&\frac{\mu}{2}\left\|\bB\bM-\boldsymbol{\Gamma}\right\|_F^2-\frac{\rho}{2}\sum_{\ell=1}^{b}\sum_{\substack{(i,j)\in\mathcal E, j\geq i}}
(\bar{B}_{i,j}^{(\ell)})^2\\
\text{s.t.}~~&[\bB]_{i,j}=\left\{
\begin{aligned}
&-\bar{B}_{i,j},~~~~~~~~~\,~~&\text{if}~i\neq j;\\
&\bar{B}_{i,i}+\textstyle\sum_{n\neq i}\bar{B}_{i,n},~~&\text{if }i=j.
\end{aligned}\right. \label{con:BbarB3} \\
&\bar{B}_{i,j}=0,~\forall~(i,j)\in\mathcal{D},\\
&\bar{B}_{i,j}= \sum_{\ell=1}^b 2^{\ell-1} \bar{B}_{i,j}^{(\ell)},~\forall~(i,j)\in\mathcal{E},\label{con:barBbarBl3}\\
~&\bar{B}_{i,j}^{(\ell)}\in[-c,c],~\bar{B}_{j,i}^{(\ell)}=\bar{B}_{i,j}^{(\ell)},~\forall~(i,j)\in \mathcal{E},~\ell=1,\dots,b, 
\end{align}
\end{subequations}
where the problem has been expressed in the real domain by defining
\begin{equation}\label{def:M}
\begin{aligned}
{\bM}&=[\RR(\mathrm{i}Z_0\bar{\bF}+\mathrm{i}Z_0\bar{\mathbf{I}}_1),~\I(\mathrm{i}Z_0\bar{\bF}+\mathrm{i}Z_0\bar{\mathbf{I}}_1)]\in\R^{(N+K)\times 2K}\\
\end{aligned}
\end{equation}
and 
\begin{equation}\label{def:Gamma}
\boldsymbol{\Gamma}=\left[\RR\left(\bar{\mathbf{I}}_1-\bar{\bF}-\frac{\boldsymbol{\Lambda}}{\mu}\right),\I\left(\bar{\mathbf{I}_1}-\bar{\bF}-\frac{\boldsymbol{\Lambda}}{\mu}\right)\right]\in\R^{(N+K)\times 2K}.
\end{equation}
Due to the symmetry and sparsity constraints, the independent tunable variables of problem \eqref{updateB} are only $\bar B_{i,j}^{(\ell)}$ with $(i,j)\in\mathcal E$, $j\ge i$, and $\ell=1,\ldots,b$. In the following, we first collect these independent variables into a compact vector, denoted by $\mathbf q$, and then express the quadratic objective function as a quadratic function of $\mathbf q$, which  transforms problem \eqref{updateB} into a box-constrained quadratic program.

For notational convenience, we use $\bar{\mathbf B}$ and $\bar{\mathbf B}^{(\ell)}$ to denote the matrix forms of $\{\bar B_{i,j}\}$ and $\{\bar B_{i,j}^{(\ell)}\}$, respectively, where entries corresponding to $(i,j)\in\mathcal D$ are set to zero. For each $i=1,2,\dots, N+K$, define 
$$\mathcal{S}_i\triangleq \{j\mid (i,j)\in\mathcal E,\ j\ge i\}=\{i=i_1<i_2\cdots<i_{|\mathcal{S}_i|}\},$$
and for each $\ell=1,2,\dots, b$, define 
\begin{equation}\label{def:ql}
\mathbf{q}^{(\ell)}=[[\bar{\bB}^{(\ell)}]_{1,\mathcal{S}_1},\dots,[\bar{\bB}^{(\ell)}]_{N+K,\mathcal{S}_{N+K}}]^T\in\R^{n_{I}\times 1},
\end{equation}
i.e.,  $\mathbf{q}^{(\ell)}$ stacks the upper-triangular entries of $\bar{\bB}^{(\ell)}$ indexed by $\mathcal E$ in a row-wise order,  where $n_{I}=\sum_{i=1}^{N+K}|\mathcal{S}_i|$.
We further stack $\{\bq^{(\ell)}\}$ as 
\begin{equation}\label{def:q}
\mathbf q=\left[(\mathbf q^{(1)})^T,(\mathbf q^{(2)})^T,\ldots,(\mathbf q^{(b)})^T\right]^T
\in\mathbb R^{bn_{I}\times 1}.
\end{equation}
The vector $\mathbf{q}$ then collects all the independent tunable variables of problem \eqref{updateB}.  Similarly,  define 
$$
\bar{\bb}=[[\bar{\bB}]_{1,\mathcal{S}_1},[\bar{\bB}]_{2,\mathcal{S}_2},\dots,[\bar{\bB}]_{N+K,\mathcal{S}_{N+K}}]^T\in\mathbb R^{n_{I}\times 1},
$$
which collects the tunable entries  in $\bar{\bB}$.  Utilizing the constraint in \eqref{con:barBbarBl3}, the vector $\bar{\bb}$ can be expressed as a linear combination of $\{\bq^{(\ell)}\}$:
\begin{equation}\label{barxbarxl}
\bar{\bb}=\sum_{\ell=1}^b2^{\ell-1}\bq^{(\ell)}.
\end{equation}

Next, we express $\|\mathbf B\mathbf M-\mathbf\Gamma\|_F^2$ as a quadratic function of  $\bar{\mathbf b}$. Let
\begin{equation}
\mathbf M^T=[\mathbf a_1,\mathbf a_2,\ldots,\mathbf a_{N+K}]
\end{equation}
and define
\begin{equation}\label{def:d}
\mathbf d=\operatorname{vec}(\mathbf\Gamma^T),
\end{equation}
where $\mathbf a_i\in\mathbb R^{2K\times 1}$ denotes the $i$-th column of $\mathbf M^T$. Since  $\mathbf B$ is linearly determined by the tunable susceptances in $\bar{\mathbf b}$ through \eqref{con:BbarB3}, the vectorized product $\operatorname{vec}((\mathbf B\mathbf M)^T)$ can be expressed as a linear function of $\bar{\mathbf b}$, i.e.,
\begin{equation}
\operatorname{vec}\!\left((\mathbf B\mathbf M)^T\right)
=\bar{\mathbf A}\bar{\mathbf b},
\end{equation}
for an appropriately constructed matrix
$\bar{\mathbf A}\in\mathbb R^{2K(N+K)\times n_I}$. We now construct $\bar{\mathbf A}$ explicitly. Define the block matrix $\bar{\mathbf A}_{i,j}\in\mathbb R^{2K\times|\mathcal S_j|}$, which characterizes the contribution of the tunable susceptances collected in $[\bar{\mathbf B}]_{j,\mathcal S_j}$ to the $i$-th row of $\mathbf B\mathbf M$, as follows:
\begin{equation}
[\bar{\mathbf A}_{i,j}]_{:,\nu}
=
\begin{cases}
\mathbf a_j, & i=j,\ \nu=1,\\
\mathbf a_j-\mathbf a_{j_\nu}, & i=j,\ \nu\ge2,\\
\mathbf a_{j_\nu}-\mathbf a_j, & i=j_\nu,\ \nu\ge2,\\
\mathbf 0, & \text{otherwise}.
\end{cases}
\end{equation}
The first case captures the contribution of the susceptance  $\bar B_{j,j}$ to the $j$-th row of $\mathbf B\mathbf M$,   whereas the remaining nonzero cases account for the contributions of an inter-port susceptance $\bar B_{j,j_\nu}$ to the $j$-th and $j_\nu$-th rows of $\mathbf B\bM$, respectively. Stacking these blocks gives
\begin{equation}
\bar{\mathbf A}
=
\begin{bmatrix}
\bar{\mathbf A}_{1,1} & \bar{\mathbf A}_{1,2} & \cdots & \bar{\mathbf A}_{1,N+K}\\
\bar{\mathbf A}_{2,1} & \bar{\mathbf A}_{2,2} & \cdots & \bar{\mathbf A}_{2,N+K}\\
\vdots & \vdots & \ddots & \vdots\\
\bar{\mathbf A}_{N+K,1} & \bar{\mathbf A}_{N+K,2} & \cdots & \bar{\mathbf A}_{N+K,N+K}
\end{bmatrix}.
\end{equation}
It then follows that
\begin{align}
\|\mathbf B\mathbf M-\mathbf\Gamma\|_F^2
&=
\left\|
\operatorname{vec}\!\left((\mathbf B\mathbf M)^T\right)
-\operatorname{vec}(\mathbf\Gamma^T)
\right\|^2 \notag\\
&=
\|\bar{\mathbf A}\bar{\mathbf b}-\mathbf d\|_2^2.
\end{align}

Using \eqref{barxbarxl}, we further have
\begin{align}
\bar{\mathbf A}\bar{\mathbf b}
&=\bar{\bA}\sum_{\ell=1}^{b}2^{\ell-1}\mathbf q^{(\ell)}=\mathbf A\mathbf q,
\end{align}
where
\begin{equation}\label{def:A}
\bA=
[\bar{\mathbf A},\,2\bar{\mathbf A},\,\ldots,\,2^{b-1}\bar{\mathbf A}]
\in\mathbb R^{2K(N+K)\times bn_I}.
\end{equation}
Therefore,
\begin{equation}
\|\mathbf B\mathbf M-\mathbf\Gamma\|_F^2
=
\|\mathbf A\mathbf q-\mathbf d\|_2^2,
\end{equation}
and problem \eqref{updateB} reduces  to

\begin{equation}\label{updatebarx}
\begin{aligned}
\min_{\mathbf q}~ &\frac{\mu}{2}\|\mathbf A\mathbf q-\mathbf d\|_2^2-\frac{\rho}{2}\|\mathbf q\|_2^2 \\
\text{s.t.}~~ &
[\mathbf q]_i\in[-c,c],
\quad i=1,2,\ldots,bn_{I}.
\end{aligned}
\end{equation}
Hence, the original $\mathbf B$-related update is reduced to a box-constrained quadratic program over the  two-level variables collected in $\mathbf q$.

We solve it using the projected gradient descent (PGD) method. Specifically, at iteration $r$, the variable $\mathbf{q}$ is updated according to
\begin{equation}\label{updateq}
\mathbf{q}^{[r+1]}=\Pi_{[-c,c]}\bigl(\mathbf{q}^{[r]}-\alpha\mathbf{g}^{[r]}\bigr),
\end{equation}
where
\begin{equation}
\mathbf{g}^{[r]}=\mu\mathbf{A}^{T}(\mathbf{A}\mathbf{q}^{[r]}-\mathbf{d})-\rho\mathbf{q}^{[r]}
\end{equation}
is the gradient of the objective function evaluated at $\mathbf{q}^{[r]}$, $\alpha>0$ denotes the stepsize, and $\Pi_{[-c,c]}(\cdot)$ denotes the Euclidean projection onto the box set $[-c,c]$. The projection is performed element-wise as
\begin{equation}
[\Pi_{[-c,c]}(\mathbf{z})]_i
=
\operatorname{sgn}(z_i)
\min\{|z_i|,c\}.
\end{equation}
After obtaining $\mathbf q$, we recover $\{\bar{\mathbf B}^{(\ell)}\}_{\ell=1}^{b}$ from
$\{\mathbf q^{(\ell)}\}_{\ell=1}^{b}$, construct
$\bar{\mathbf B}$ from \eqref{con:barBbarBl3}, and then obtain $\mathbf B$ according to \eqref{con:BbarB3}.

\subsubsection{Update of $\boldsymbol \Lambda$} Finally, the Lagrange multiplier is updated as 
\begin{equation}\label{updatelambda}
\boldsymbol{\Lambda}=\boldsymbol{\Lambda}+\mu(\mathrm{i}Z_0\bB(\bar{\bF}+\bar{\mathbf{I}}_1)-(\bar{\mathbf{I}}_1-\bar{\bF})).
\end{equation}
\begin{algorithm}[t]
\caption{ADMM Algorithm for Solving Problem \eqref{problem2:sumrate-penalty}}
\label{alg:online_admm}
\begin{algorithmic}[1]
\State \textbf{Input:} Channel matrix $\mathbf H$, quantization parameter $c$, penalty parameters $\rho$ and $\mu$, maximum transmit power $P_T$.
\State \textbf{Output:} Beamforming matrix $\bar{\mathbf F}$, power allocation vector $\mathbf p$, and susceptance matrix $\mathbf B$.
\State Initialize $\mathbf p^{0}$, $\bar{\mathbf F}^{0}$,  $\mathbf B^{0}$, and $\boldsymbol{\Lambda}^{0}$.
\State Set $t=0$.
\Repeat
    \State Update $(\boldsymbol{\gamma}^{t+1},\mathbf y^{t+1})$ by \eqref{update:gamma} and \eqref{update:y}. 
    \State Update $\mathbf p^{t+1}$ by \eqref{update:pk}.
        \State Update $\bar{\mathbf F}^{t+1}$ by \eqref{update:f}. 
            \State Construct $\mathbf M^{t+1}$, $\boldsymbol{\Gamma}^{t+1}$, $\mathbf d^{t+1}$, and $\mathbf A^{t+1}$ as in \eqref{def:M}, \eqref{def:Gamma}, \eqref{def:d}, and \eqref{def:A}, respectively.
    \State Initialize the inner PGD iteration with $\mathbf q^{[0]}=\mathbf q^{t}$.
\State Set $r=0$.
    \Repeat
        \State Update $\bq^{[r+1]}$ by \eqref{updateq}. 
        \State Set $r=r+1$.
    \Until{convergence}
    \State Set $\bq^{t+1}=\bq^{[r]}.$
    \State Recover $\{\bar{\mathbf B}^{(\ell),t+1}\}_{\ell=1}^{b}$ from $\bq^{t+1}$, and construct $\bar{\bB}^{t+1}$ and $\bB^{t+1}$ according to  \eqref{con:barBbarBl3} and  \eqref{con:BbarB3}, respectively.
    \State Update the Lagrange multiplier by \eqref{updatelambda}.
    \State Set $t=t+1$.
\Until{convergence}
\end{algorithmic}
\end{algorithm}

A summary of the above ADMM procedure is given in Algorithm \ref{alg:online_admm}.

\begin{remark}
A very recent work \cite{zhang2026beamforming} also studied finite-resolution MiLAC design. The methods in \cite{zhang2026beamforming} first solve a continuous susceptance problem and then project the obtained solution onto the discrete feasible set. To reduce the resulting quantization loss, \cite{zhang2026beamforming} further proposes an alternating refinement method, which updates one susceptance at a time. This approach provides an efficient post-hoc refinement strategy, but its performance can be sensitive to the quality of the initial point due to the discrete and nonconvex nature of the problem. In contrast, the proposed approach is based on an exact continuous penalty model. The continuation framework in Algorithm \ref{alg:rho_update} first seeks a favorable continuous configuration under a moderate penalty and then uses the solution from each stage to initialize the next stage with a larger penalty. This allows the algorithm to gradually enforce discreteness while retaining useful information from the preceding solutions, which may help reduce sensitivity to the initial configuration. Moreover, our framework includes an offline stage for optimizing the quantization parameter $c$ based on channel statistics (see Section \ref{sec:offline}), rather than assuming a fixed hardware dynamic range and grid as in \cite{zhang2026beamforming}. Numerical comparisons with the baselines in \cite{zhang2026beamforming} are provided in Section \ref{sec:simulation}.
\end{remark}

\section{Algorithm for Solving the Offline Codebook Design Problem}\label{sec:offline}
In this section, we consider the offline codebook design problem in \eqref{problem:codebook}, which aims to optimize the quantization parameter $c$ based on the channel statistics. Directly solving \eqref{problem:codebook} is challenging because the expectation cannot be computed analytically, as $R^\star(c;\bH)$ is the optimal value of a non-convex online discrete beamforming problem involving both discrete and continuous variables, which is difficult to evaluate even for a fixed $c>0$ and a given channel realization $\bH$.
To address this issue, we adopt a sample average approximation (SAA)-based method. The resulting SAA problem is first formulated in Section \ref{subsec:SAA}. Following the penalty-based approach developed in Section \ref{subsec:penalty}, we then introduce a penalty model for the SAA problem in Section \ref{subsec:SAA_penalty}. Finally, Section \ref{subsec:SAA_ADMM} develops an ADMM-based algorithm for solving the resulting penalty model. 
\subsection{Sample Average Approximation}\label{subsec:SAA}
Let $\{\mathbf H^{(m)}\}_{m=1}^{M}$ denote $M$ independent channel samples drawn from the channel distribution. Then, the expectation in \eqref{problem:codebook} can be approximated by the empirical average
\begin{equation}\label{sample_problem}
\frac{1}{M}\sum_{m=1}^M R^\star(c;\bH^{(m)}),
\end{equation}
where $R^\star(c;\bH^{(m)})$ is the optimal value of the online problem in \eqref{problem2:sumrate} under the channel realization $\mathbf H^{(m)}$.
   For compactness, let $\mathcal{U}=(\mathbf{p},\bar{\mathbf F}, \bB,\{\bar{B}_{i,j}\},\{\bar{B}_{i,j}^{(\ell)}\}_{\ell=1}^{b})$ denote the collection of optimization variables in \eqref{problem2:sumrate}, and let \(\mathcal X\) denote the feasible set defined by all constraints in \eqref{problem2:sumrate} except the two-level discrete constraint in \eqref{con:binary}, i.e., $\mathcal{X}$ collects all constraints that are independent of the specific value of the quantization parameter $c$, while the dependence on $c$ is isolated in the two-level discrete constraint.

The sample average approximation problem is then formulated as
\begin{subequations}\label{problem:codebook2}
\begin{align}
\max_{c\ge 0,\{\mathcal{U}^{(m)}\}}~& \frac{1}{M}\sum_{m=1}^{M}
R\!\left(\mathbf p^{(m)},\bar{\mathbf F}^{(m)};\mathbf H^{(m)}\right)\\
\mathrm{s.t.}~~~~~&\mathcal{U}^{(m)}\in{\cal{X}},\quad m=1,\ldots,M,\\
&
\bar B_{i,j}^{(m,\ell)}\in\{-c,c\},~ \forall ~(i,j)\in\mathcal E,\\
&\hspace{1.4cm} \forall~\ell=1,\ldots,b,\ m=1,\ldots,M,\label{codebook:discrete}
\end{align}
\end{subequations}
where the superscript $m$ indicates that the corresponding variables are associated with the $m$-th channel sample.

\subsection{Penalty Model of Problem \eqref{problem:codebook2}}\label{subsec:SAA_penalty}
Following the penalty-based approach in Section  \ref{subsec:penalty}, we relax the two-level discrete constraints to their convex hull, i.e.,  $\bar{B}_{i,j}^{(m,\ell)}\in[-c,c]$, and introduce  the following quadratic penalty term to promote discreteness: 
$$P(\{\bar B_{i,j}^{(m,\ell)}\},c)=\frac{1}{M}\sum_{m=1}^M\sum_{\ell=1}^b\sum_{(i,j)\in\mathcal{E},  j\geq i}((\bar{B}_{i,j}^{(m,\ell)})^2-c^2).$$
The penalty problem is then given by 
 \begin{subequations}\label{problem:codebook2_penalty}
\begin{align}
\max_{c\ge 0,\{\mathcal{U}^{(m)}\}}\,& \frac{1}{M}\sum_{m=1}^{M}
R(\mathbf p^{(m)},\bar{\mathbf F}^{(m)};\mathbf H^{(m)})+\frac{\rho}{2}P(\{\bar B_{i,j}^{(m,\ell)}\},c)\\
\mathrm{s.t.}~~~~&\mathcal{U}^{(m)}\in{\cal{X}},\quad m=1,\ldots,M,\\
&
\bar B_{i,j}^{(m,\ell)}\in[-c,c],~ \forall ~(i,j)\in\mathcal E,\notag\\
&\hspace{1.6cm} \forall~\ell=1,\ldots,b,\ m=1,\ldots,M. \label{con:box2}
\end{align}
\end{subequations}
Unlike the online problem in Section \ref{sec:online}, the quantization parameter $c$ is now an optimization variable rather than a fixed constant. As a result, both the relaxed box constraints and the penalty term depend on $c$,  making the exactness analysis more involved.  The following theorem shows that the exactness property still holds when the penalty parameter is sufficiently large.

 \begin{theorem}\label{equivalence2}
There exists a constant $\bar{\rho}>0$ such that, for any $\rho>\bar{\rho}$, problems  \eqref{problem:codebook2} and \eqref{problem:codebook2_penalty} are globally equivalent. Specifically, any optimal solution of  \eqref{problem:codebook2} is also optimal to \eqref{problem:codebook2_penalty}. 
  Conversely, any optimal solution to problem \eqref{problem:codebook2_penalty} satisfies the two-level discrete constraint in \eqref{codebook:discrete}, and is further an optimal solution to problem \eqref{problem:codebook2}.
\end{theorem}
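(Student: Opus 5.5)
The plan is to follow the same Lipschitz-versus-penalty argument as for Theorem~\ref{equivalence}. The new ingredient is a positive lower bound on the quantization parameter $c$ at any optimum of \eqref{problem:codebook2_penalty}. Such a bound is needed because the penalty gain produced by rounding scales with $c$.

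\textbf{Step 1 (uniform Lipschitz bound).} First I will show that the per-sample sum rate, viewed as a function of the independent two-level variables $\mathbf q^{(m)}$ (with $\bar{\bF}^{(m)}$ eliminated via the constraint in $\mathcal X$), has a modulus of continuity that holds for all $c$ and all $\mathbf p^{(m)}$ with $\mathbf 1^T\mathbf p^{(m)}\le P_T$.
\begin{itemize}
\item Since $\bB$ is real symmetric, $\mathbf I+\mathrm iZ_0\bB$ is always invertible, with $\|(\mathbf I+\mathrm iZ_0\bB)^{-1}\|_2\le 1$. Hence $\bar{\bF}=(\mathbf I+\mathrm iZ_0\bB)^{-1}(\mathbf I-\mathrm iZ_0\bB)\bar{\mathbf I}_1$ is uniquely determined by $\bB$.
\item The scattering matrix $\bthe$ is unitary, so $\|\bF\|_2\le 1$.
\item From the resolvent identity, $\|\bthe(\bB)-\bthe(\bB')\|_2\le 2Z_0\|\bB-\bB'\|_2$.
\item $\bB$ depends linearly on $\mathbf q$, with $\|\bB-\bB'\|_2\le \kappa\|\mathbf q-\mathbf q'\|_1$ for a constant $\kappa$ depending only on $N,K,b$ and the connectivity.
\item Each SINR term has a gradient in $\bF$ that is bounded on $\{\|\bF\|_2\le 1,\ \mathbf 1^T\mathbf p\le P_T\}$, in terms of $P_T$, $\sigma^2$ and $\max_k\|\h_k^{(m)}\|$.
\end{itemize}
Together these give $|R(\mathbf p,\bar{\bF}(\mathbf q);\bH^{(m)})-R(\mathbf p,\bar{\bF}(\mathbf q');\bH^{(m)})|\le L_m\|\mathbf q-\mathbf q'\|_1$, with $L_m$ independent of $c$. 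I set $L=\max_m L_m$.

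\textbf{Step 2 (lower bound on $c$).}
\begin{itemize}
\item Let $V^\star$ be the optimal value of \eqref{problem:codebook2}. We have $V^\star>0$: by the standing assumption, some input port is connected to some output port, so a suitable $c>0$ and discrete configuration gives a nonzero $\bF$ and hence a positive rate.
\item Every feasible point of \eqref{problem:codebook2} is feasible for \eqref{problem:codebook2_penalty} with zero penalty. Therefore the optimal value of \eqref{problem:codebook2_penalty} is at least $V^\star$ for every $\rho$.
\item Conversely, any point feasible for \eqref{problem:codebook2_penalty} with parameter $c$ has $|\bar B^{(m,\ell)}_{i,j}|\le c$, so $\|\bB^{(m)}\|_2\le\kappa' c$. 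Since $[\bthe]_{K+1:N+K,1:K}=[\bthe-\mathbf I]_{K+1:N+K,1:K}$, this gives $\|\bF^{(m)}\|_2\le 2Z_0\kappa' c$.
\item Hence the rate part of the objective is at most $\frac1M\sum_m K\log\bigl(1+P_T\|\bH^{(m)}\|^2(2Z_0\kappa'c)^2/\sigma^2\bigr)$, and the penalty part is nonpositive. This upper bound tends to $0$ as $c\to0$.
\item Consequently every optimal solution of \eqref{problem:codebook2_penalty} satisfies $c\ge c_{\min}>0$, where $c_{\min}$ is the value at which this bound equals $V^\star$. Importantly, $c_{\min}$ does not depend on $\rho$.
\end{itemize}

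\textbf{Step 3 (rounding argument and conclusion).} Set $\bar\rho=2L/c_{\min}$ and take $\rho>\bar\rho$.
\begin{itemize}
\item Suppose an optimal solution of \eqref{problem:codebook2_penalty} had some $|\bar B^{(m,\ell)}_{i,j}|<c$.
\item Keep $c$ and $\mathbf p^{(m)}$ fixed. Round every independent two-level variable $x$ to $\operatorname{sgn}(x)c$ (choosing $+c$ if $x=0$), and recompute $\bB^{(m)}$ and $\bar{\bF}^{(m)}$ from the constraints, which preserves membership in $\mathcal X$.
\item For $x\in[-c,c]$ we have $c^2-x^2=(c+|x|)(c-|x|)\ge c(c-|x|)$. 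So the penalty term increases by at least $\frac{\rho}{2}\cdot\frac{c}{M}\sum(c-|x|)$.
\item By Step 1, the rate term decreases by at most $\frac{L}{M}\sum(c-|x|)$.
\item Since $\rho c/2\ge\rho c_{\min}/2>L$, the objective strictly increases, contradicting optimality. Hence every optimal solution satisfies \eqref{codebook:discrete}.
\end{itemize}
The two directions of the equivalence then follow as in Theorem~\ref{equivalence}. An optimum of \eqref{problem:codebook2_penalty} is discrete, so its objective equals its rate, which is at most $V^\star$; combined with the lower bound $\ge V^\star$ from Step 2, it is optimal for \eqref{problem:codebook2}. Conversely, an optimum of \eqref{problem:codebook2} has zero penalty and attains $V^\star$, which equals the penalty optimal value, so it is optimal for \eqref{problem:codebook2_penalty}.

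I expect the main obstacle to be Step 2, namely obtaining a $\rho$-independent lower bound on $c$ at optimality. This requires arguing carefully that $V^\star>0$ is attained, or at least approached, under the connectivity assumption. If attainment at some finite $c$ is delicate, I would first restrict to a compact interval of $c$ or argue with a near-optimal discrete point. The $c$-uniform Lipschitz constant in Step 1 relies on unitarity of $\bthe$ and should be routine.
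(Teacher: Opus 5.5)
Your proposal is correct and follows essentially the same route as the paper, which proves Theorem~\ref{equivalence2} via Proposition~\ref{general:equivalence2}. That route has three parts: a $c$-independent Lipschitz constant; a $\rho$-independent lower bound on $c$ at any penalty optimum, obtained by comparing the value of a positive discrete feasible point with an objective bound that vanishes as $c\to0$ (the paper gets this bound from $f(\mathbf 0)=0$ plus Lipschitz continuity, you get it from $\|\bF^{(m)}\|_2\le 2Z_0\kappa'c$); and the same rounding argument. The one step you only assert, $V^\star>0$, is what the paper actually verifies. Since the codebook has no zero level, it sets all two-level variables to $c$ and uses the Neumann expansion $[\bar{\bF}^{(m)}]_{r_0,k_0}=2\mathrm{i}Z_0c(2^b-1)+\mathcal O(c^2)\neq0$ for small $c$, together with genericity of $\bh_{k_0}^{(m)}$. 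Only one such positive point is needed, so attainment of $V^\star$ is not an issue.
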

\begin{proof}
See Appendix \ref{app:equivalence2}.
\end{proof}

\subsection{Algorithm for Solving the Penalty Model in \eqref{problem:codebook2_penalty}}\label{subsec:SAA_ADMM}
In this subsection, we develop an ADMM-based algorithm for solving the penalty problem in \eqref{problem:codebook2_penalty}. Compared with the online penalty problem in \eqref{problem2:sumrate-penalty}, problem \eqref{problem:codebook2_penalty} contains $M$ sample-dependent copies of the online variables, while the quantization parameter $c$ is shared
by all channel samples. Therefore, the updates of most variables can be performed sample-wise and follow the same form as those in Algorithm \ref{alg:online_admm}.

Specifically, for each channel sample $m$, we introduce the FP auxiliary
variables $(\boldsymbol{\gamma}^{(m)},\y^{(m)})$ and the dual
variable $\boldsymbol{\Lambda}^{(m)}$ associated with the bilinear
constraint
\[\mathrm{i}Z_0\mathbf B^{(m)}(\bar{\mathbf F}^{(m)}+\bar{\mathbf I}_1)=\bar{\mathbf I}_1-\bar{\mathbf F}^{(m)}.\]
Given $\bB$-related variables, i.e., $\{\bB^{(m)}\},\{\bar{B}_{i,j}^{(m)}\},\{\bar{B}_{i,j}^{(m,\ell)}\}$,  and $c$, the updates of $(\boldsymbol{\gamma}^{(m)},\y^{(m)})$,
$\mathbf{p}^{(m)}$, and $\bar{\bF}^{(m)}$ are identical to those
in Algorithm \ref{alg:online_admm}, with
\(\mathbf H\), \(\mathbf p\), \(\bar{\mathbf F}\), \(\mathbf B\), and
\(\boldsymbol{\Lambda}\) replaced by
\(\mathbf H^{(m)}\), \(\mathbf p^{(m)}\), \(\bar{\mathbf F}^{(m)}\),
\(\mathbf B^{(m)}\), and \(\boldsymbol{\Lambda}^{(m)}\), respectively.
The details are omitted to avoid repetition.

The main difference from Algorithm \ref{alg:online_admm} lies in the update of the $\bB$-related variables. In the offline problem, the quantization parameter $c$ is also an optimization variable and is shared across all channel samples. Hence, the $\bB$-related variables from different samples are
coupled through the common box constraints in \eqref{con:box2}.

Following the vectorization in Section \ref{update:B}, let $\mathbf q^{(m)}\in\mathbb R^{bn_I\times 1}$ collect all upper-triangular two-level variables associated with the $m$-th channel sample. For fixed
\(\{(\boldsymbol{\gamma}^{(m)},\mathbf y^{(m)},\mathbf p^{(m)},
\bar{\mathbf F}^{(m)},\boldsymbol{\Lambda}^{(m)})\}_{m=1}^{M}\), the
joint update of \(\{\mathbf q^{(m)}\}_{m=1}^{M}\) and \(c\) can be written as
\begin{equation}\label{qsubproblem}
\begin{aligned}
\min_{\{\mathbf q^{(m)}\},c\ge 0}~&
\frac{1}{M}\sum_{m=1}^{M}
\left(\frac{\mu}{2}\|\mathbf A^{(m)}\mathbf q^{(m)}-\mathbf d^{(m)}\|_2^2-\frac{\rho}{2}\|
\mathbf q^{(m)}\|_2^2\right)\\
&+\frac{\rho}{2}bn_Ic^2
\\
\mathrm{s.t.}\quad
&
[\mathbf q^{(m)}]_i\in[-c, c],~i=1,\ldots,bn_I,~
m=1,\ldots,M .
\end{aligned}
\end{equation}
Here, $\bd^{(m)}$ and $\bA^{(m)}$ are constructed in the same way as $\bd$ and $\bA$ in \eqref{def:d} and \eqref{def:A}, respectively, using the variables associated with the $m$-th channel sample. When $c$ is fixed, problem \eqref{qsubproblem} reduces to \(M\) independent online $\bB$-subproblems.
When \(c\) is optimized, however, all samples are coupled through the
common constraint $[\mathbf q^{(m)}]_i\in[-c,c]$.

We next rewrite problem \eqref{qsubproblem}. Let
$$\bq=[(\mathbf q^{(1)})^T,\ldots,(\mathbf q^{(M)})^T]^T\in\R^{Mbn_I\times 1}.$$
For fixed $\bq$, the optimal $c$ is the smallest value satisfying all box constraints, i.e.,
$c^\star=\|\mathbf q\|_\infty.$
Therefore, after eliminating $c$ and multiplying the objective by $M$, problem \eqref{qsubproblem} can be equivalently written as the following unconstrained optimization problem over $\bq\in\R^{Mbn_I\times 1}$:
\begin{equation}\label{qsubproblem2}
\min_{\mathbf{q}}
\frac{\mu}{2}\|\bA\bq-\bd\|_2^2-\frac{\rho}{2}\|\mathbf{q}\|_2^2+\frac{\rho}{2}Mbn_I\|\mathbf q\|_\infty^2,
\end{equation}
where $\bd=[(\mathbf d^{(1)})^T,\ldots,(\mathbf d^{(M)})^T]^T$ and $\bA=\operatorname{blkdiag}(\bA^{(1)},\dots,\bA^{(M)}).$

The above problem consists of a smooth quadratic term and a nonsmooth regularization term involving $\|\bq\|_\infty^2$.  We solve it by the proximal gradient method \cite{parikh2014proximal}.  Define
$$f(\mathbf q)=\frac{\mu}{2}\|\mathbf A\mathbf q-\mathbf d\|_2^2-\frac{\rho}{2}\|\mathbf q\|_2^2,~
h(\mathbf q)=\frac{\rho}{2}Mbn_I\|\mathbf q\|_\infty^2.$$
At the $r$-th inner iteration, $\bq$ is updated as
$$\mathbf q^{[r+1]}=\operatorname{prox}_{\alpha h}\left(\mathbf q^{[r]}-\alpha\nabla f(\mathbf q^{[r]})\right),
$$
where $\alpha>0$ is the stepsize and
$$\nabla f(\mathbf q)=
\mu\mathbf A^T(\mathbf A\mathbf q-\mathbf d)-\rho\mathbf q.
$$
The proximal operator is given by
\[
\operatorname{prox}_{\alpha h}(\mathbf z)
=
\arg\min_{\mathbf x}
\left\{
\frac{1}{2}\|\mathbf x-\mathbf z\|_2^2
+
\frac{\alpha\rho}{2}Mbn_I\|\mathbf x\|_\infty^2
\right\}.
\]
As shown in \cite[Algorithm 1]{jacobsson2017quantized}, the proximal operator of the squared $\ell_\infty$-norm, i.e.,
$$\operatorname{prox}_{\lambda\|\cdot\|_{\infty}^2}(\mathbf{z})=\arg\min_{\x}\left\{\|\x\|_{\infty}^2+\frac{1}{2\lambda}\|\x-\mathbf{z}\|_2^2\right\},$$ can be computed very efficiently via a sorting-based closed-form procedure.   After the inner proximal gradient iteration converges, we recover
$\{\mathbf q^{(m)}\}_{m=1}^M$ from $\bq$. Then for each channel sample $m$, we  construct
\(\{\bar{\mathbf B}^{(m,\ell)}\}_{\ell=1}^{b}\),
\(\bar{\mathbf B}^{(m)}\), and \(\mathbf B^{(m)}\) as in Section \ref{update:B}. The quantization parameter is recovered as $c=\|\mathbf q\|_\infty.$

Finally, the dual variables are updated by
$$
\begin{aligned}
\boldsymbol{\Lambda}^{(m)}=\boldsymbol{\Lambda}^{(m)}+\mu(\mathrm{i}Z_0\bB^{(m)}(\bar{\bF}^{(m)}+\bar{\mathbf{I}}_1)&-(\bar{\mathbf{I}}_1-\bar{\bF}^{(m)})),\\&~\forall~m=1,\dots, M.
\end{aligned}
$$
The resulting ADMM algorithm alternates among the sample-wise updates of
\((\boldsymbol{\gamma}^{(m)},\mathbf y^{(m)})\), \(\mathbf p^{(m)}\), and
\(\bar{\mathbf F}^{(m)}\), the joint update of
$\bB$-related variables and $c$, and the dual  update.

\section{Simulation Results}\label{sec:simulation}
For all the simulations, the quantization parameter $c$ is determined by the proposed offline codebook design algorithm in Section \ref{sec:offline} using $M=20$ training channel samples. The obtained codebook is then fixed and the sum-rate performance is evaluated over another $100$ independently generated test channel samples using the proposed online design algorithm in Section \ref{sec:online}.  This section is organized into two parts. In Section \ref{subsec:simulation1}, we validate the effectiveness of the proposed algorithms. Then, in Section \ref{subsec:simulation2}, we investigate the impact of different quantization resolutions and MiLAC architectures on the sum-rate performance.
\subsection{Effectiveness of the Proposed Algorithms}\label{subsec:simulation1}
In this subsection, we evaluate the effectiveness of the proposed approaches. We focus on  Rayleigh fading channels, i.e., $\operatorname{vec}(\bH)\sim\mathcal{CN}(\mathbf{0},\mathbf{I})$, and consider a fully-connected MiLAC. The number of transmit antennas and the number of users are fixed as $N=64$ and $K=4$, respectively. The SNR is set to $10$ dB.

\begin{figure}
\includegraphics[width=0.37\textwidth]{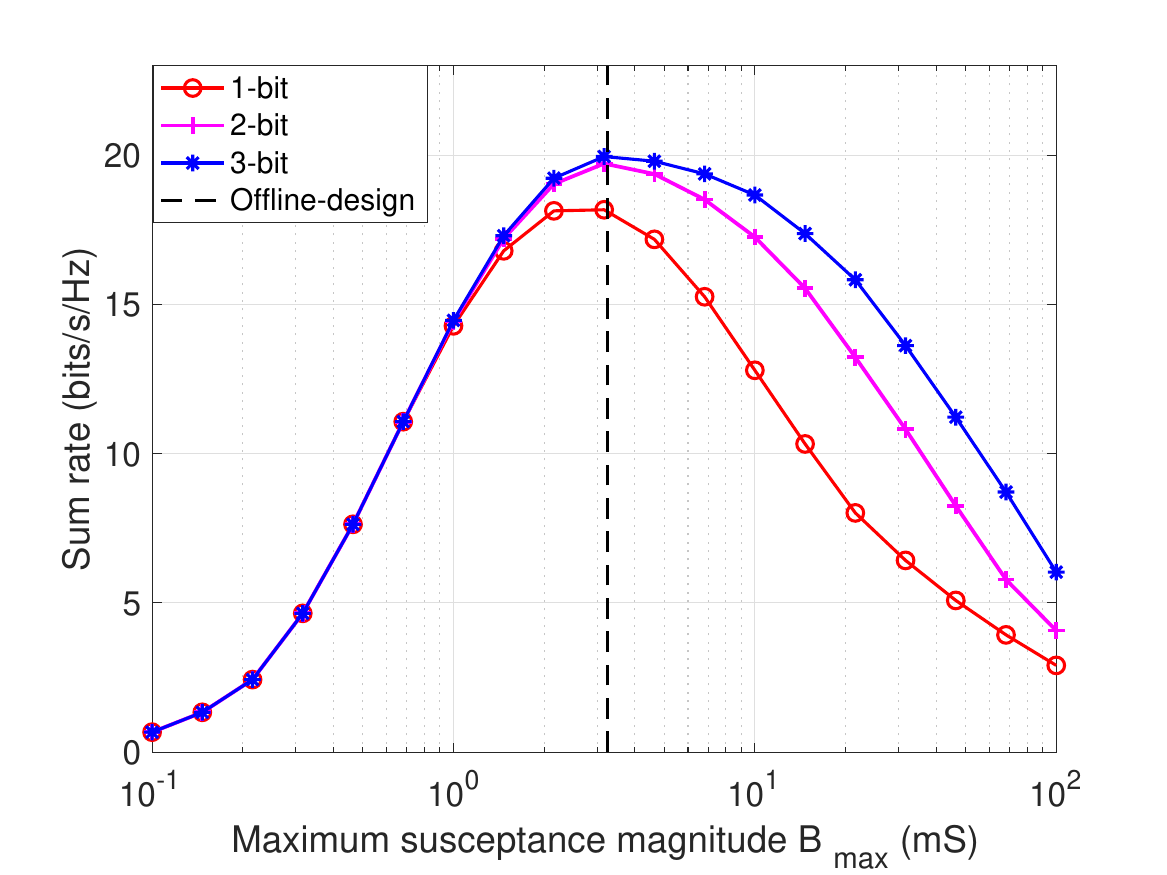}
\centering
\caption{Sum rate versus the maximum susceptance magnitude $B_{\max}$ for fully-connected MiLAC, where $N=64$, $K=4$, and $\mathrm{SNR}=10$ dB. The dashed vertical line denotes the $B_{\max}$ obtained by the proposed offline codebook-design algorithm. Since the offline-designed $B_{\max}$ values for 1-bit, 2-bit, and 3-bit codebooks are almost identical, only one dashed line is shown for clarity.}
\label{fig:c}
\end{figure}
Fig.~\ref{fig:c} evaluates the effectiveness of the proposed offline codebook-design approach in Section \ref{sec:offline}. The quantization parameter $c$ determines both the spacing between adjacent codebook levels and the maximum susceptance magnitude, which is given by $B_{\max}=(2^b-1)c$. As a benchmark, we perform an exhaustive grid search over $B_{\max}$, following  the approach in \cite{zhang2026beamforming}. For each candidate $B_{\max}$, the corresponding quantization parameter is set as $c=B_{\max}/(2^b-1)$. The offline-designed $B_{\max}$ values for 1-bit, 2-bit, and 3-bit codebooks obtained using the algorithm in Section \ref{sec:offline} are nearly identical.  Therefore, only one dashed vertical line is shown in Fig. \ref{fig:c} for clarity. This observation suggests that the proposed offline codebook-design algorithm identifies a nearly common optimal hardware dynamic range across different quantization resolutions.

As shown in Fig. \ref{fig:c}, the sum-rate first increases with $B_{\max}$ and then decreases. This is because a small dynamic range restricts the feasible susceptance values, whereas a large dynamic range leads to a coarse finite-resolution codebook. By comparing the 1-bit, 2-bit, and 3-bit cases, we observe that their performance is similar in the small-$B_{\max}$ regime, where the loss is dominated by the limited dynamic range. In contrast, at large $B_{\max}$, higher-resolution codebooks clearly outperform lower-resolution ones, since the performance loss is mainly caused by coarse quantization. Moreover, the $B_{\max}$ obtained by the proposed offline codebook-design algorithm is close to the exhaustive-search optimum, demonstrating its effectiveness as a systematic and scalable alternative to exhaustive grid search.

Fig. \ref{fig:algorithm} compares the proposed online beamforming design algorithm in Section \ref{sec:online} with two baselines in \cite{zhang2026beamforming}. Specifically, PHP refers to the post-hoc projection method in \cite[Algorithm 2]{zhang2026beamforming}, and AR refers to the alternating refinement method in \cite[Algorithm 3]{zhang2026beamforming}. For a fair comparison, all finite-resolution methods use the same codebook obtained by the proposed offline codebook-design algorithm. We also include the performance of the unquantized MiLAC obtained by the algorithm in \cite{wu2026microwave} as an upper bound.

As shown in Fig. \ref{fig:algorithm}, PHP suffers from a large performance loss because direct projection does not optimize the discrete susceptances with respect to the sum-rate objective. AR significantly improves PHP by refining the projected solution. However, since AR updates one susceptance at a time over the finite codebook, it may be trapped in a poor local solution. As a result, its performance gradually saturates as the number of quantization bits increases, leaving a noticeable gap from the unquantized benchmark.  In contrast, the proposed algorithm handles the finite-resolution constraints through a penalty-continuation framework and consistently achieves the best performance among all finite-resolution methods. In particular, as the number of quantization bits increases, the performance of the proposed method approaches that of the unquantized benchmark. 
\begin{figure}
\includegraphics[width=0.37\textwidth]{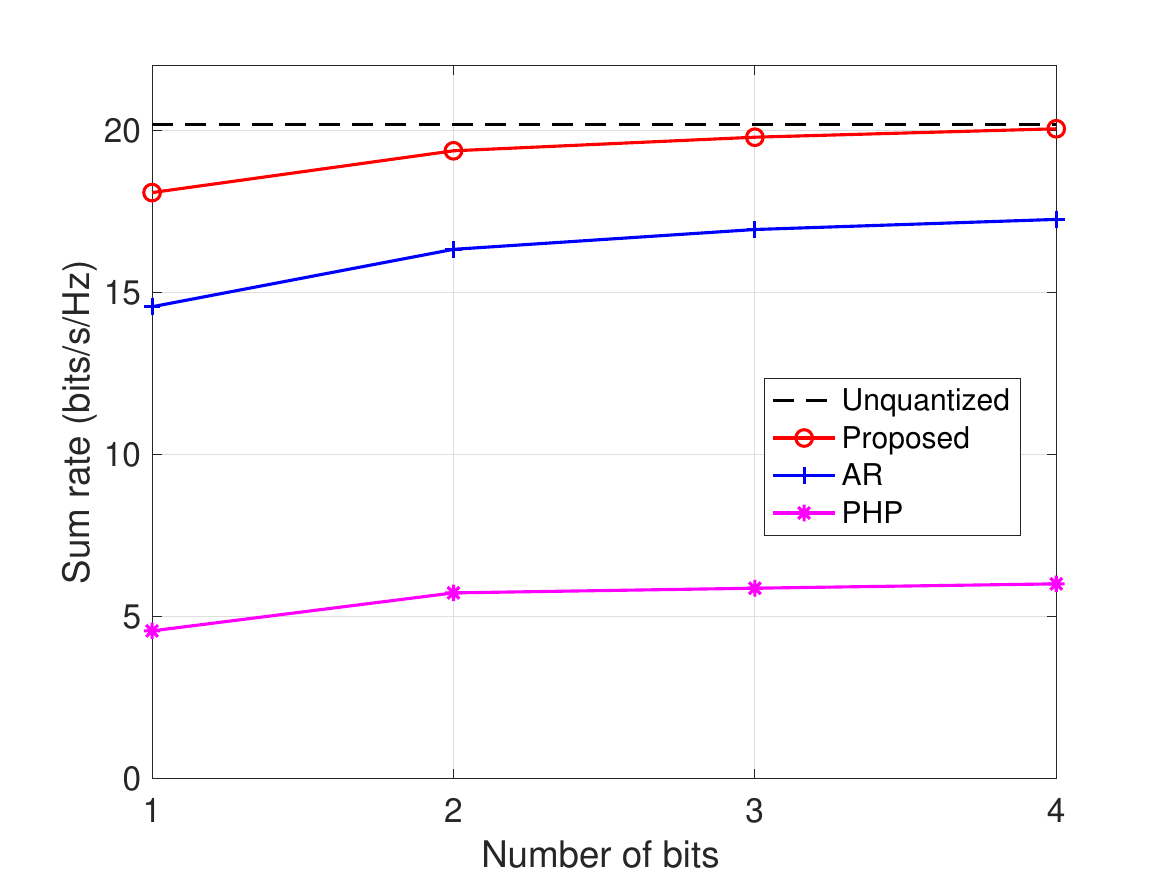}
\centering
\caption{Sum-rate versus the number of bits for fully-connected MiLAC, where $N=64$, $K=4$, and $\mathrm{SNR}=10$ dB. }
\label{fig:algorithm}
\end{figure}

\subsection{Performance Evaluation under Different Quantization Resolutions and Architectures}\label{subsec:simulation2}
In this subsection, we investigate the impact of quantization resolution and MiLAC architecture on the sum-rate performance using the proposed approach. 
\begin{figure}
\centering
\subfigure[Rayleigh fading]{\includegraphics[width=0.37\textwidth]{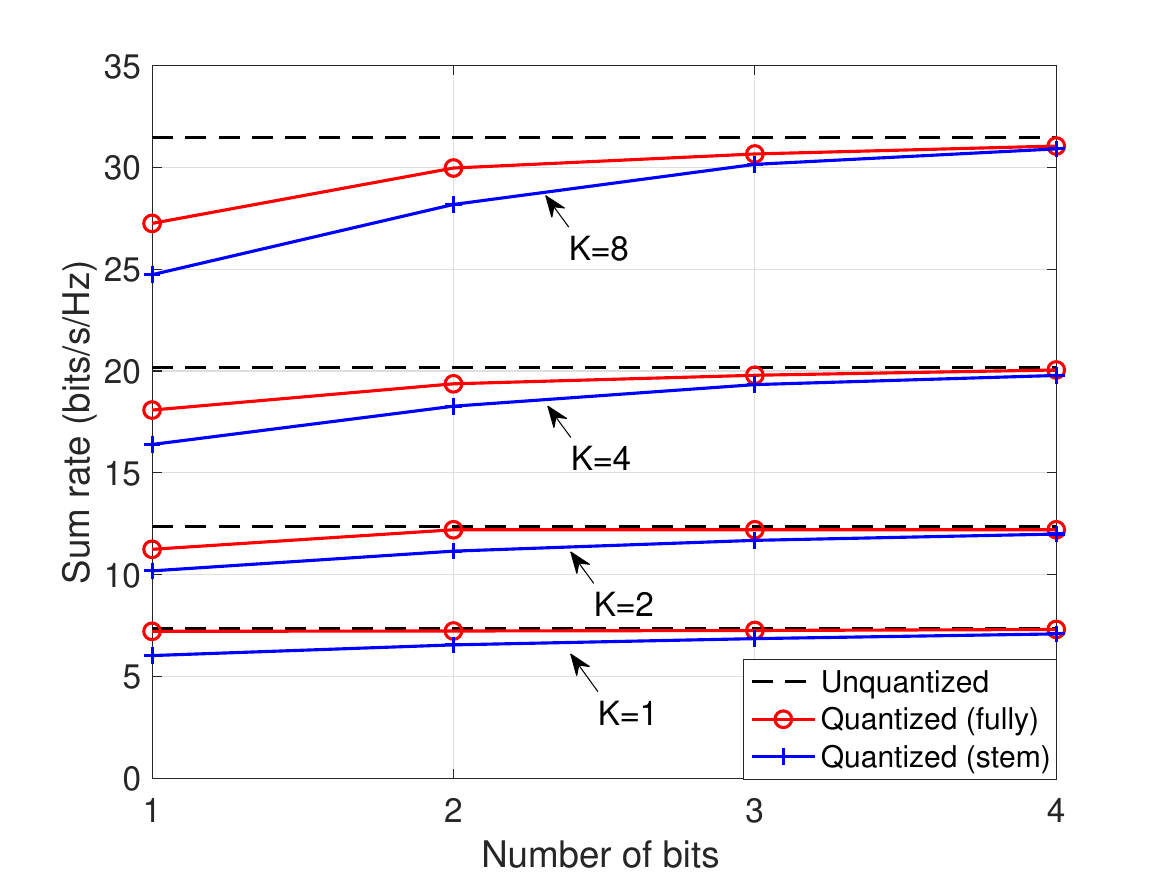}}
\subfigure[Rician fading with Rician factor $\kappa=3$.]{\includegraphics[width=0.37\textwidth]{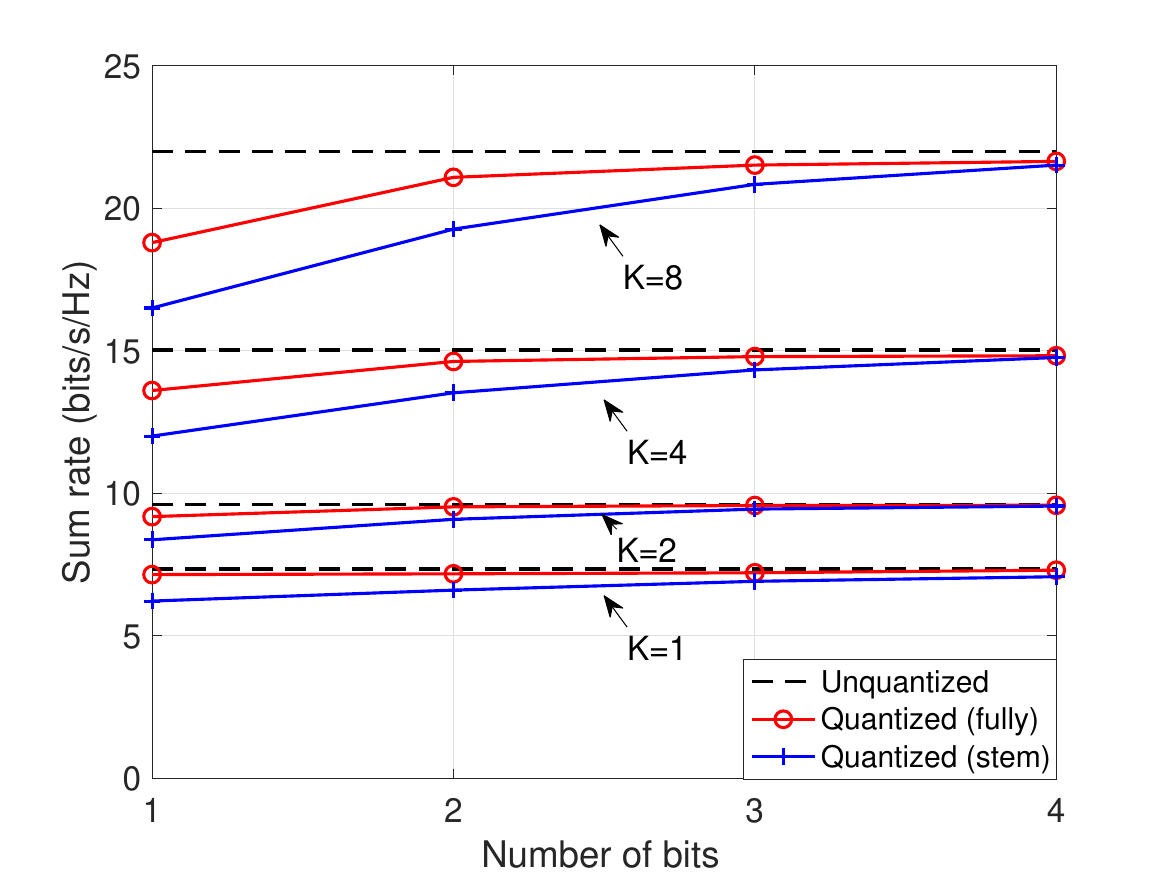}}
\caption{Sum-rate versus the number of quantization bits for fully-connected and stem-connected MiLAC, where $N=64$ and SNR$=10$ dB. Results are shown for different numbers of users under (a) Rayleigh fading and (b) Rician fading with $\kappa=3$. For the stem-connected MiLAC, the number of central ports is set as $2K-1$.}
\label{fig:bit}
\end{figure}

\begin{figure*}
\centering
\subfigure[1-bit]{\includegraphics[width=0.33\textwidth]{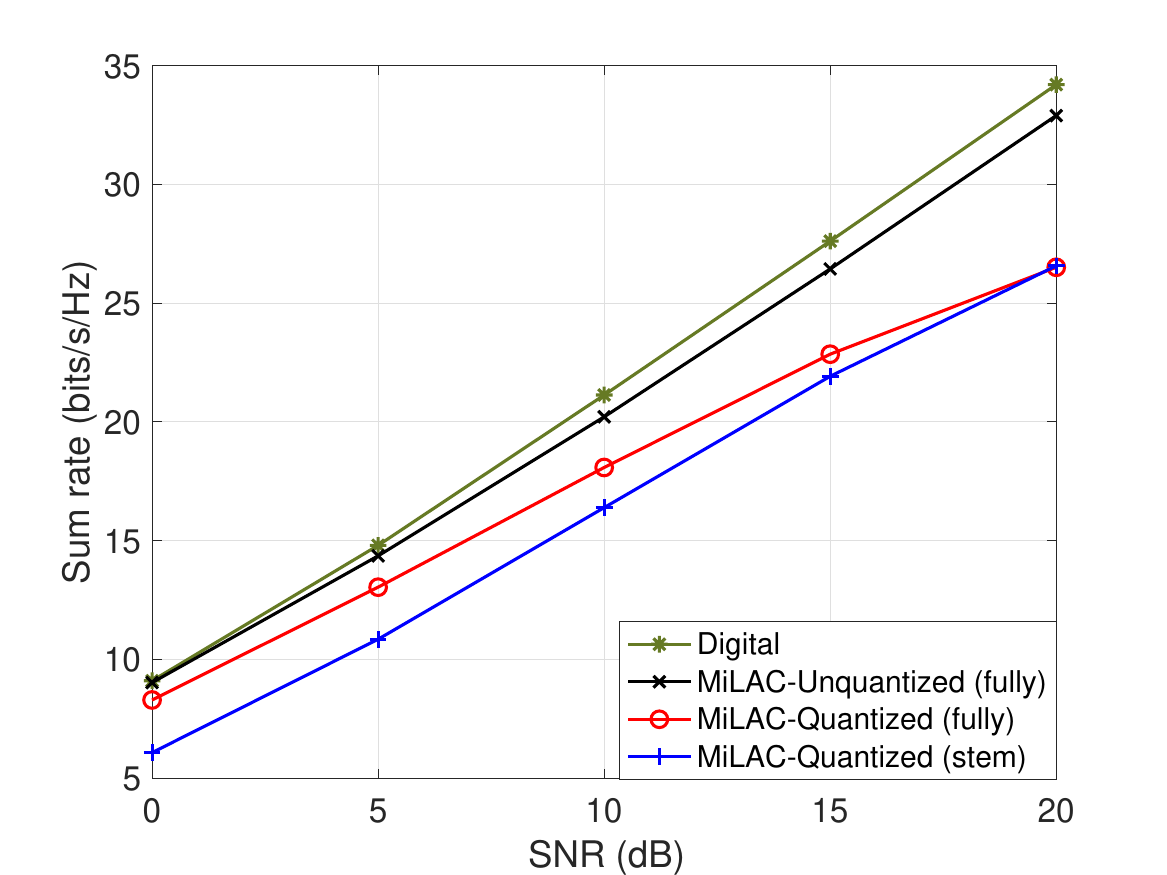}}\hspace{-0.5cm}
\subfigure[2-bit]{\includegraphics[width=0.33\textwidth]{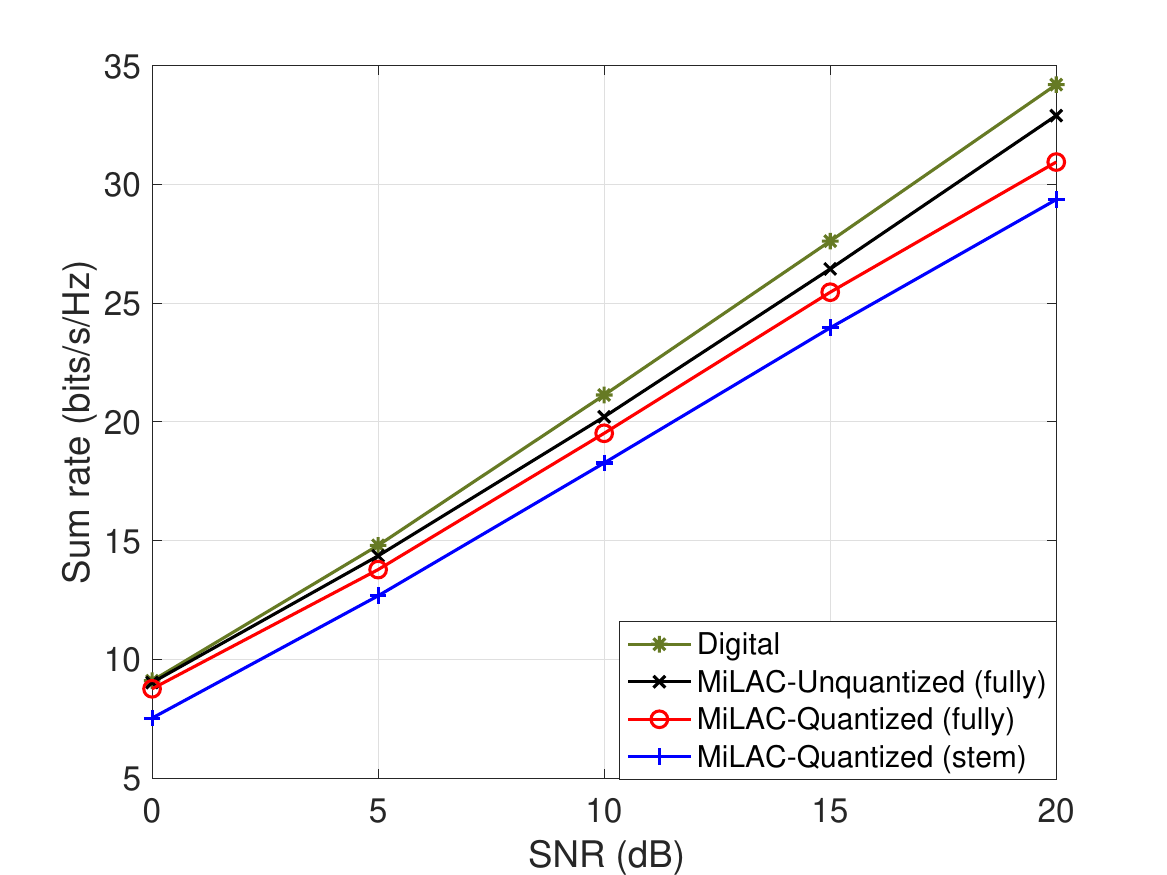}}
\subfigure[3-bit]{\includegraphics[width=0.33\textwidth]{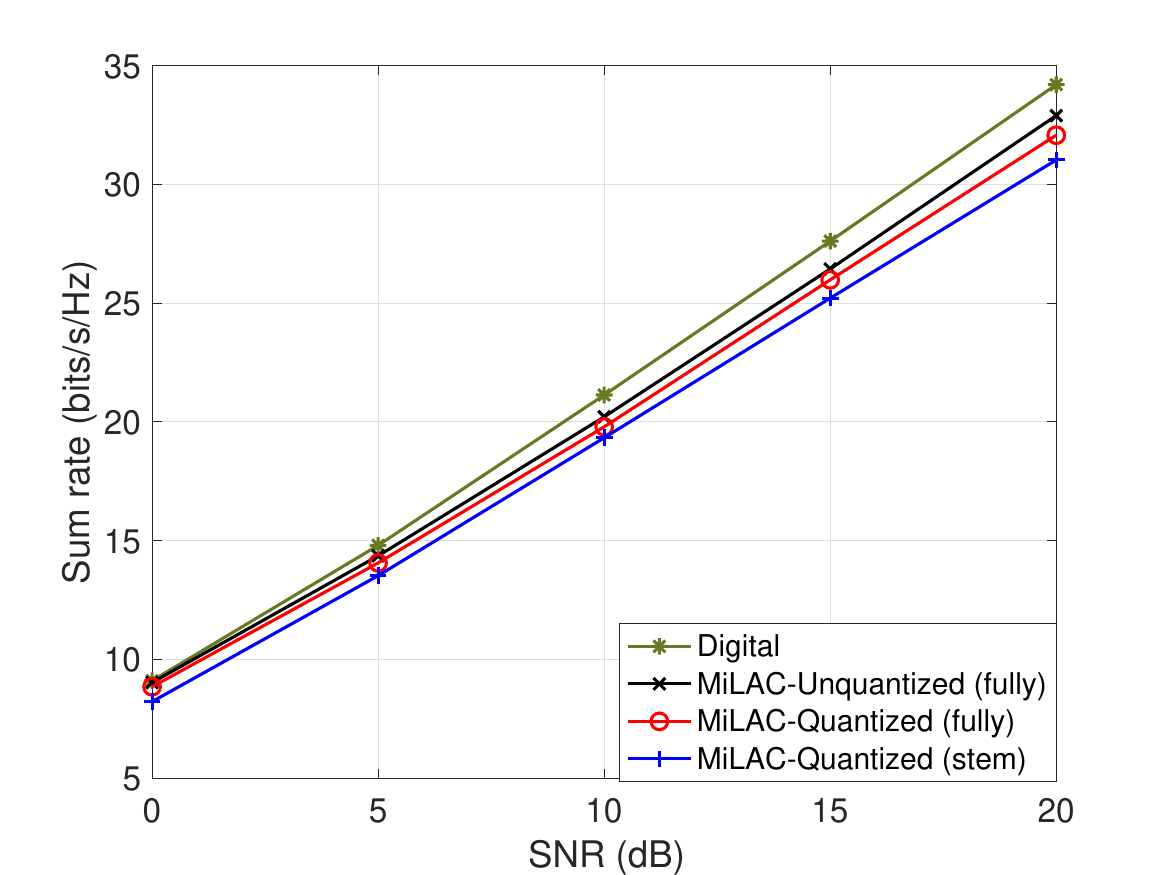}}
\caption{Sum-rate versus SNR for fully-connected and stem-connected MiLAC over Rayleigh fading channel, where $N=64$ and $K=4$. For the stem-connected MiLAC, the number of central ports is set to $2K-1$. }
\label{fig:SNR}
\end{figure*}

Fig. \ref{fig:bit} investigates the impact of quantization resolution under different numbers of users and channel distributions. Both fully-connected MiLAC and stem-connected MiLAC are considered. For the stem-connected architecture, the number of central ports is set to $2K-1$, which is sufficient to attain the fully-connected performance in the unquantized case \cite{MIMOcapacity2,zhang2026beamforming}. As observed, the sum-rate performance increases with the number of quantization bits and approaches the unquantized benchmark with only a few bits. For example, for fully-connected MiLAC, 1 bit is sufficient to approach the unquantized benchmark when $K=1$, while 2 bits are sufficient when $K=2$. As the number of users increases, more quantization bits are required because greater design flexibility is needed to manage multiuser interference (MUI). The figure also reveals the tradeoff between hardware connectivity and quantization resolution. The fully-connected architecture generally outperforms the stem-connected architecture because it provides more tunable interconnections. However, the performance gap between the two architectures decreases as the number of bits increases. In all considered cases, they achieve nearly identical performance with 4-bit quantization.  This indicates that the reduced connectivity of the stem-connected architecture can be compensated by using a higher-resolution codebook. Similar trends are observed under both Rayleigh and Rician fading channels.

Fig. \ref{fig:SNR} depicts the sum-rate performance of fully- and stem-connected MiLAC versus SNR under different quantization resolutions. Digital beamforming and unquantized fully-connected MiLAC-aided beamforming are included as performance benchmarks. As shown in Fig. \ref{fig:SNR}, the performance of unquantized fully-connected MiLAC-aided beamforming closely matches that of the digital beamforming benchmark, which is consistent with the observations in \cite{wu2026microwave}. For finite-resolution MiLAC, the gap from the unquantized benchmark becomes larger as the SNR increases, especially in the 1-bit case. This is because, in the high-SNR regime, the sum-rate is mainly limited by MUI. With only 1-bit resolution, the codebook does not provide sufficient flexibility to accurately suppress MUI. At SNR= $20$\,dB, the fully- and stem-connected MiLAC architectures even achieve similar performance in the 1-bit case. 
This suggests that the coarse quantization limits the benefit of the additional interconnections for MUI suppression.  Nevertheless, with a 3-bit codebook, both architectures approach the unquantized benchmark over the considered SNR range.


\begin{figure}
\includegraphics[width=0.37\textwidth]{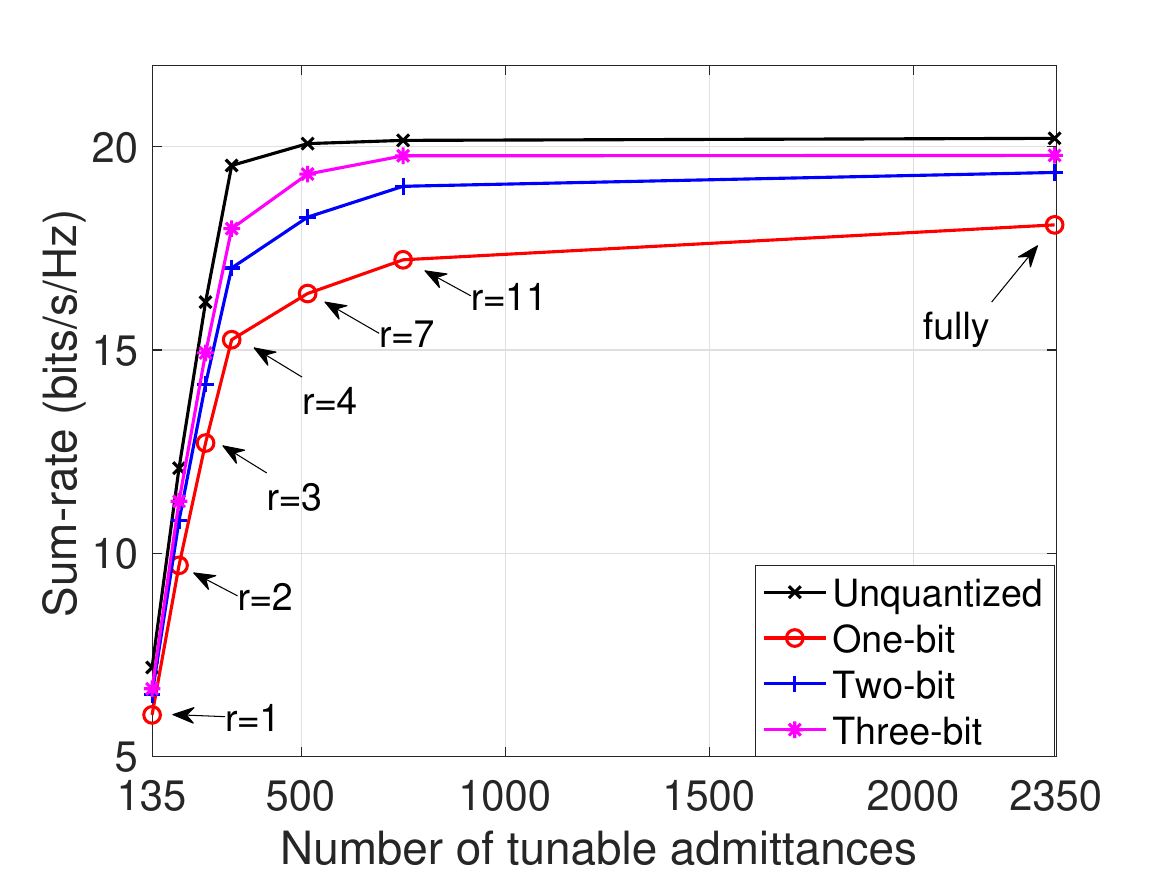}
\centering
\caption{Sum-rate versus the number of tunable admittances for different quantization resolutions, where $N=64$, $K=4$, and SNR $=10$ dB. Different connectivity levels are obtained by varying the number of central ports $r$ in the stem-connected architecture.}
\label{fig:pareto}
\end{figure}
Fig. \ref{fig:pareto} further investigates the tradeoff between hardware complexity and sum-rate performance by varying the number of central ports $r$ in the stem-connected architecture under different quantization resolutions. Several observations can be made. First, in the unquantized case, the stem-connected MiLAC with $r=2K-1=7$ achieves the same performance as the fully-connected MiLAC, consistent with the theoretical result in \cite{zhang2026beamforming}.  Interestingly, the results further show that $r$ can be reduced to $K$ while retaining near-optimal performance. However, reducing $r$ below $K$ causes a significant rate loss because some input ports are not directly connected to the output ports, which severely limits the available input-output transfer paths.  Second, for finite-resolution MiLAC, increasing the resolution from 1 bit to 2 bits yields a significant performance improvement, whereas the additional gain diminishes at higher resolutions. 
Third, the MiLAC upper-bound performance can be approached with  low-resolution quantization and a sparse stem-connected architecture. In particular, a 3-bit stem-connected MiLAC with $r=2K-1$ achieves more than $95\%$ of the sum-rate performance of the unquantized fully-connected case, while using only a 3-bit codebook and $22\%$ of the tunable components required by the fully-connected architecture.   

Overall, the results show that a sparse stem-connected MiLAC architecture with moderate quantization resolution can achieve near-optimal sum-rate performance with significantly lower hardware complexity. As user loading and SNR increase, finer quantization and denser connectivity are required to provide sufficient design flexibility. 


\section{Conclusion}\label{sec:conclusion}
This paper studied finite-resolution MiLAC-aided multiuser beamforming under lossless and reciprocal constraints. For both the online beamforming design and offline codebook design problems, we established exact continuous penalty models that are globally equivalent to their original discrete formulations (when the   penalty parameters are sufficiently large). Based on these models, efficient ADMM-based algorithms were developed for optimizing the finite-resolution MiLAC configuration and the codebook parameter, respectively. Numerical results demonstrated that the proposed online design substantially outperforms the existing projection-based and alternating-refinement methods, while the proposed offline codebook design achieves performance close to exhaustive search. The results also revealed the tradeoff between quantization resolution and circuit connectivity of MiLAC. In particular, a 3-bit stem-connected MiLAC can approach the performance of an unquantized fully-connected MiLAC.  These results suggest that moderate resolution and sparsely connected MiLAC architectures can provide an attractive tradeoff between beamforming performance and hardware complexity.

\appendices
\section{Proof of Theorem \ref{equivalence}}\label{proof:equivalence}
In this appendix, we prove Theorem \ref{equivalence}. We first present two auxiliary results that will be used in the proof in Appendix \ref{subapp:auxiliary}, and then provide the detailed proof of the theorem in Appendix \ref{subapp:proof}. 
\subsection{Auxiliary results}\label{subapp:auxiliary}
To prove Theorem \ref{equivalence}, we first establish a general exactness result for the quadratic penalty reformulation.
\begin{proposition}\label{general:equivalence}
Given $c>0$, consider the discrete optimization problem
\begin{equation*}
\begin{aligned}
 (P_d):  \quad 
    \max_{\mathbf X}\quad & f(\mathbf X) \\
    \mathrm{s.t.}\quad 
    & X_{i,j}\in\{-c,c\},\quad \forall i,j ,
\end{aligned}
\end{equation*}
where $\mathbf X\in\mathbb R^{m_1\times m_2}$ is the optimization variable. Define the penalty function
\begin{equation}
    p(\mathbf X)
    =\sum_{i,j}\left(X_{i,j}^2-c^2\right),
\end{equation}
and consider the continuous penalty problem
\begin{equation*}\label{Pc}
\begin{aligned}
    (P_c):\quad 
    \max_{\mathbf X}\quad & f(\mathbf X)+\rho p(\mathbf X) \\
    \mathrm{s.t.}\quad 
    & X_{i,j}\in[-c,c],\quad \forall i,j .
\end{aligned}
\end{equation*}
Assume that $f$ is $L$-Lipschitz continuous, i.e.,
\begin{equation}
    |f(\mathbf X)-f(\mathbf Y)|\le L\|\mathbf X-\mathbf Y\|_F.
\end{equation}
 Then, for any $
    \rho>\rho_0:=Lc^{-1}$,
any optimal solution $\mathbf X^\star$ of $(P_c)$ satisfies
\begin{equation}
    X_{i,j}^\star\in\{-c,c\},\quad \forall i,j.
\end{equation}
Consequently, problems $(P_c)$ and  $(P_d)$ are equivalent.
\end{proposition}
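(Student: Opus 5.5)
The plan is a proof by contradiction that uses a one-coordinate perturbation. Let $\mathbf X^\star$ be optimal for $(P_c)$ with $\rho>Lc^{-1}$, and suppose some entry satisfies $|X^\star_{i,j}|<c$. Build a competitor $\mathbf Y$ that agrees with $\mathbf X^\star$ except at entry $(i,j)$. At that entry, set $Y_{i,j}=c$ if $X^\star_{i,j}\ge 0$ and $Y_{i,j}=-c$ otherwise. Then $\mathbf Y$ is feasible for $(P_c)$. Write $\delta:=c-|X^\star_{i,j}|\in(0,c]$ for the distance moved; we have $\|\mathbf Y-\mathbf X^\star\|_F=\delta$.

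Next compare objective values. By the Lipschitz assumption, $f(\mathbf Y)\ge f(\mathbf X^\star)-L\delta$. The penalty changes only in the $(i,j)$ term. Setting $t=|X^\star_{i,j}|$, the increase is
\[
p(\mathbf Y)-p(\mathbf X^\star)=c^2-t^2=(c-t)(c+t)=\delta(c+t)\ge \delta c .
\]
Therefore
\[
f(\mathbf Y)+\rho p(\mathbf Y)-\big(f(\mathbf X^\star)+\rho p(\mathbf X^\star)\big)\ge \delta(\rho c-L)>0,
\]
since $\rho>Lc^{-1}$ and $\delta>0$. This contradicts the optimality of $\mathbf X^\star$, so every entry of $\mathbf X^\star$ lies in $\{-c,c\}$. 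The only delicate point is the lower bound $c+t\ge c$, which is exactly where the factor $c^{-1}$ in $\rho_0$ comes from. Choosing the nearer endpoint keeps $\delta\le c$ and uses $t\ge 0$.

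Equivalence then follows by a standard argument. On the discrete set $p\equiv 0$, and on the box $p\le 0$. So for any $\mathbf X$ in the box, $f(\mathbf X)+\rho p(\mathbf X)\le f(\mathbf X)$. Hence the optimal value of $(P_c)$ is at least that of $(P_d)$, because $(P_d)$'s feasible set is contained in $(P_c)$'s and the objectives agree there. Conversely, an optimal $\mathbf X^\star$ of $(P_c)$ is discrete by the argument above (an optimum exists whenever $f$ is continuous on the compact box). Its $(P_c)$ value equals $f(\mathbf X^\star)$, which is at most the $(P_d)$ optimum. The two optimal values therefore coincide. It follows that $(P_c)$-optima are $(P_d)$-optima. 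Likewise any $(P_d)$-optimum attains the common value in $(P_c)$, so it is $(P_c)$-optimal.
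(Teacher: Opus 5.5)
Your proof is correct and follows essentially the same argument as the paper. Both compare $\mathbf X^\star$ with its nearest-endpoint rounding, bound the loss in $f$ by $L$ times the displacement, and bound the penalty gain from below via $(c-t)(c+t)\ge c(c-t)$, which gives the same threshold $\rho>Lc^{-1}$. The differences are minor: you round a single non-binary entry rather than all entries at once, which spares you the paper's step $\|\hat{\mathbf X}-\mathbf X^\star\|_F\le\sum_{i,j}d_{i,j}$, and you explicitly verify (using existence of a maximizer on the compact box) that the optimal values and optimal sets coincide, which the paper leaves implicit in its ``consequently.''
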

\begin{proof}
The proof is standard and follows similar steps as in \cite[Theorem 1]{onebit_shao}. For completeness, we provide a detailed proof below. 

Let $\bX^\star$ be an optimal solution of  $(P_c)$.  Define the rounded point $\hat{\mathbf X}$ by
$$    \hat X_{i,j}=c\operatorname{sign}(X_{i,j}^\star).$$ It suffices  to show that $\bX^\star=\hat\bX$. Let $    d_{i,j}=c-|X_{i,j}^\star|\ge 0. $ With this definition, we have
$|\hat X_{i,j}-X_{i,j}^\star|=d_{i,j},$ and hence
\begin{equation}\label{Xhat-Xstar}
  \|\hat{\mathbf X}-\mathbf X^\star\|_F
    =\big(\sum_{i,j}d_{i,j}^2\big)^{1/2}
    \le \sum_{i,j}d_{i,j}.
\end{equation}
By the Lipschitz continuity of $f$,
\begin{equation}\label{functionvalue}   f(\hat{\mathbf X})-f(\mathbf X^\star)
    \ge -L\|\hat{\mathbf X}-\mathbf X^\star\|_F.
\end{equation}
Moreover,
\begin{equation}\label{penalty}
\begin{aligned}
    p(\hat{\mathbf X})-p(\mathbf X^\star) &=
\sum_{i,j}
    \left(c^2-(X_{i,j}^\star)^2\right) \\
    &=\sum_{i,j}
    \left(c-|X_{i,j}^\star|\right)
    \left(c+|X_{i,j}^\star|\right) \\
    &\ge
    {c}\sum_{i,j}d_{i,j},
\end{aligned}
\end{equation}
where the inequality follows from the definition of $d_{i,j}$. 
Combining \eqref{Xhat-Xstar}-\eqref{penalty} gives
\begin{equation}\label{inequality}
\begin{aligned}
    &\left[f(\hat{\mathbf X})+\rho p(\hat{\mathbf X})\right]
    -\left[f(\mathbf X^\star)+\rho p(\mathbf X^\star)\right]
    \\
    &\ge
    -L\|\hat{\mathbf X}-\mathbf X^\star\|_F
    +{\rho c}\sum_{i,j}d_{i,j}
    \\
    &\ge
    \left({\rho c}-L\right)
    \|\hat{\mathbf X}-\mathbf X^\star\|_F,
\end{aligned}
\end{equation}
where the last inequality uses \eqref{Xhat-Xstar}. Therefore, if $\rho>Lc^{-1}$, 
then the right-hand side is strictly positive whenever there exists some $(i,j)$ such that
$
X_{i,j}^\star\neq \hat{X}_{i,j}. 
$
This contradicts the optimality of $\bX^\star$ for $(P_c)$. Hence, ${\bX}^\star=\hat{\bX}$, which completes the proof. 
\end{proof}

The following lemma presents two operations that preserve Lipschitz continuity, which will also be useful in the proof of Theorem \ref{equivalence}. Its proof is standard and omitted for brevity. 
\begin{lemma}[Lipschitz-preserving operations]\label{lemma1}
The following two statements hold.

    (i) \textbf{Composition.} 
    Let $g:\mathbb C^{m_1\times m_2}\to \mathbb C^{n_1\times n_2}$    and $f:\mathbb C^{n_1\times n_2}\to \mathbb C^{q_1\times q_2}$  be Lipschitz continuous with Lipschitz constants $L_g$ and $L_f$, respectively. Then \(f\circ g\) is Lipschitz continuous with constant $L_fL_g$.

    (ii) \textbf{Maximum function.}
    Let
$h(\bX)=\max_{\bY\in\mathcal Y} f(\bX,\bY).$
    Suppose that \(f(\bX,\bY)\) is uniformly Lipschitz continuous with respect to \(\bX\), i.e., there exists \(L>0\) such that
    \[
    |f(\bX_1,\bY)-f(\bX_2,\bY)|\le L\|\bX_1-\bX_2\|_F,~     \forall ~\bX_1,\bX_2, \forall~ \bY\in\mathcal Y.
    \]
    Then \(h(\bX)\) is Lipschitz continuous with constant \(L\).
\end{lemma}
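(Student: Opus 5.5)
Both parts follow directly from the definition of Lipschitz continuity by a triangle-type chain of inequalities. Throughout, norms are Frobenius norms on the respective matrix spaces, as in the statement.

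For part (i), I would fix arbitrary $\bX_1,\bX_2\in\mathbb C^{m_1\times m_2}$ and chain the two Lipschitz inequalities:
\[
\|f(g(\bX_1))-f(g(\bX_2))\|_F\le L_f\|g(\bX_1)-g(\bX_2)\|_F\le L_fL_g\|\bX_1-\bX_2\|_F .
\]
The first inequality applies the Lipschitz property of $f$ at the points $g(\bX_1),g(\bX_2)$. The second applies that of $g$. Nothing else is needed.

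For part (ii), $f$ is real-valued, since a maximum is taken. I would first treat the case where the maximum over $\mathcal Y$ is attained, which is the case implicitly assumed by the notation $\max$. Fix $\bX_1,\bX_2$ and let $\bY_1\in\mathcal Y$ attain $h(\bX_1)=f(\bX_1,\bY_1)$. By the uniform Lipschitz property,
\[
h(\bX_1)=f(\bX_1,\bY_1)\le f(\bX_2,\bY_1)+L\|\bX_1-\bX_2\|_F\le h(\bX_2)+L\|\bX_1-\bX_2\|_F ,
\]
where the last step uses $f(\bX_2,\bY_1)\le h(\bX_2)$. Exchanging the roles of $\bX_1$ and $\bX_2$ gives the reverse inequality. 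Hence $|h(\bX_1)-h(\bX_2)|\le L\|\bX_1-\bX_2\|_F$.

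If the maximum is only a supremum, I would replace $\bY_1$ by an $\epsilon$-maximizer with $f(\bX_1,\bY_1)\ge h(\bX_1)-\epsilon$. The same chain then holds with an extra $\epsilon$, and letting $\epsilon\to0$ gives the bound. The same inequality shows that if $h$ is finite at one point, it is finite everywhere.

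The only mild subtlety is this finiteness and attainment issue in (ii), which the $\epsilon$-argument handles. In the intended application in the proof of Theorem \ref{equivalence}, the inner maximization (over power allocation and related variables) is over a compact feasible set with a continuous objective. There the maximum is attained, so the simpler argument suffices.
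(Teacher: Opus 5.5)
Your argument is correct and is the standard proof the paper has in mind; the paper omits it as standard. For (i) you chain the two Lipschitz bounds, and for (ii) you use $h(\bX_1)\le f(\bX_2,\bY_1)+L\|\bX_1-\bX_2\|_F\le h(\bX_2)+L\|\bX_1-\bX_2\|_F$ and then swap the roles of $\bX_1$ and $\bX_2$. The $\epsilon$-maximizer remark is a harmless extra: in the application the maximum over the compact set $\Delta$ is attained, and your argument also works unchanged when $\bX$ is restricted to a subset such as $\mathcal F$.
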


\subsection{Proof of Theorem \ref{equivalence}}\label{subapp:proof}

We now prove Theorem \ref{equivalence}. The main idea is to reduce problem \eqref{problem2:sumrate} to the general form in Proposition \ref{general:equivalence}. To this end, we first eliminate the variables that are uniquely determined by the relaxed two-level variables and then verify the Lipschitz continuity of the resulting objective function.

Let $\bar{\mathbf B}$ and $\bar{\mathbf B}^{(\ell)}$ denote the matrix forms of $\{\bar B_{i,j}\}$ and $\{\bar B_{i,j}^{(\ell)}\}$, respectively, where entries corresponding to $(i,j)\in\mathcal D$ are set to zero. Let $\mathbf q^{(\ell)}$ collect the upper-triangular entries of $\bar{\mathbf B}^{(\ell)}$ indexed by $\mathcal E$, as defined in \eqref{def:ql}, and let $\mathbf q$ denote the stacked vector of $\{\mathbf q^{(\ell)}\}_{\ell=1}^{b}$ as defined in \eqref{def:q}. 

For any given $\mathbf q$, the matrices $\{\bar{\mathbf B}^{(\ell)}\}_{\ell=1}^{b}$ are uniquely determined by the inverse of the vectorization operation in \eqref{def:ql}. Then $\bar{\mathbf B}$ is determined by
\[
\bar{\mathbf B}
=
\sum_{\ell=1}^{b}2^{\ell-1}\bar{\mathbf B}^{(\ell)},
\]
and $\mathbf B$ is further determined by the linear relation in \eqref{con:BbarB2}. Finally, $\bar{\mathbf F}$ is uniquely determined by the bilinear constraint in \eqref{con:BbarF2}, or equivalently,
\begin{equation}\label{proof:barF}
\bar{\mathbf F}
=
(\mathbf I+\mathrm{i}Z_0\mathbf B)^{-1}
(\mathbf I-\mathrm{i}Z_0\mathbf B)\bar{\mathbf I}_1.
\end{equation}
 Hence, we can regard $\bar{\mathbf F}$ as a function of $\mathbf q$, denoted by $\bar{\mathbf F}(\mathbf q)$.

Define
\[
g(\bar{\mathbf F})
=
\max_{\mathbf 1^T\mathbf p\le P_T,\ \mathbf p\ge \mathbf 0}
R(\mathbf p,\bar{\mathbf F};\mathbf H),
\]
and
\[
f(\mathbf q)
=
g(\bar{\mathbf F}(\mathbf q)).
\]
Then problem \eqref{problem2:sumrate} can be  written as 
$$\max_{{\bq}} f(\bq),~\text{s.t. }[\bq]_{i}\in\{-c,c\},~\forall~i.$$
According to Proposition \ref{general:equivalence}, it remains to show that $f(\mathbf q)$ is Lipschitz continuous over the box $\bq\in[-c,c]^{bn_I}$.

We first show that $g(\bar{\mathbf F})$ is Lipschitz continuous. From \eqref{proof:barF},  $\bar{\mathbf F}$ consists of the first $K$ columns of the scattering matrix $\bthe$. Under the lossless and reciprocal model, this scattering matrix is unitary. Hence,
\[
\|\bar{\mathbf F}\|_2\le 1.
\]
Define the compact sets
\[
\mathcal F
=
\{\mathbf X\in\mathbb C^{(N+K)\times K}\mid \|\mathbf X\|_2\le 1\}
\]
and
\[
\Delta
=
\{\mathbf p\in\mathbb R^{K}\mid \mathbf 1^T\mathbf p\le P_T,\ \mathbf p\ge \mathbf 0\}.
\]
For any fixed channel realization $\mathbf H$, the function $R(\mathbf p,\bar{\mathbf F};\mathbf H)$ is continuously differentiable with respect to $(\bar{\mathbf F},\mathbf{p})$ on $\mathcal F\times \Delta$.  Therefore, its gradient with respect to $\bar{\mathbf F}$ is uniformly bounded on the compact set $\mathcal F\times\Delta$, which implies that $R(\mathbf p,\bar{\mathbf F};\mathbf H)$ is uniformly Lipschitz continuous with respect to $\bar{\mathbf F}\in\mathcal{F}$. By Lemma \ref{lemma1}, $g(\bar{\bF})$  is Lipschitz continuous with respect to $\bar{\mathbf F}\in\mathcal{F}$.

Next, we show that $\bar{\mathbf F}(\mathbf q)$ is Lipschitz continuous. It is clear that the mapping from $\mathbf q$ to $\{\bar{\mathbf B}^{(\ell)}\}_{\ell=1}^{b}$, from $\{\bar{\mathbf B}^{(\ell)}\}_{\ell=1}^{b}$ to $\bar{\mathbf B}$, and from $\bar{\mathbf B}$ to $\mathbf B$ are all  linear. Hence, the mapping from $\mathbf q$ to $\mathbf B$ is Lipschitz continuous. It remains to show that the mapping from $\mathbf B$ to $\bar{\mathbf F}$ is Lipschitz continuous. First,  \eqref{proof:barF} can be equivalently expressed as 
$$\bar{\bF}=-\bar{\mathbf{I}}_1+2(\mathbf{I}+\mathrm{i}Z_0\bB)^{-1}\bar{\mathbf{I}}_1.$$
 Given $\bB_1, \bB_2\in\R^{(N+K)\times (N+K)}$, let 
$$\bar{\mathbf F}_1=-\bar{\mathbf{I}}_1+2(\mathbf{I}+\mathrm{i}Z_0\bB_1)^{-1}\bar{\mathbf{I}}_1$$ 
and
$$\bar{\mathbf F}_2=-\bar{\mathbf{I}}_1+2(\mathbf{I}+\mathrm{i}Z_0\bB_2)^{-1}\bar{\mathbf{I}}_1.$$
We have 
$$
\begin{aligned}
&\|\bar{\bF}_1-\bar{\bF}_2\|_F\\
&\overset{(a)}{\leq} 2\|(\mathbf{I}+\mathrm{i}Z_0\bB_1)^{-1}-(\mathbf{I}+\mathrm{i}Z_0\bB_2)^{-1}\|_F\\
&\overset{(b)}{=} 2\|\mathrm{i}Z_0(\mathbf{I}+\mathrm{i}Z_0\bB_1)^{-1}(\bB_2-\bB_1)(\mathbf{I}+\mathrm{i}Z_0\bB_2)^{-1}\|_F\\
&\leq 2Z_0\|(\mathbf{I}+\mathrm{i}Z_0\bB_1)^{-1}\|_2\|(\mathbf{I}+\mathrm{i}Z_0\bB_2)^{-1}\|_2\|\bB_2-\bB_1\|_F\\
&\overset{(c)}{\leq} 2Z_0\|\bB_2-\bB_1\|_F.
\end{aligned}$$
where (a) uses the inequalities $\|\bX\bY\|_F\leq \|\bX\|_2\|\bY\|_F$ and  $\|\bar{\mathbf{I}}_1\|_2\leq 1$, (b)  applies $\bX^{-1}-\bY^{-1}=\bX^{-1}(\bY-\bX)\bY^{-1}$, and (c) holds since for any real symmetric matrix $\bB$,  $\|(\mathbf{I}+\mathrm{i}\bB)^{-1}\|_2\leq 1$. Hence, the mapping from $\bB$  to $\bar{\bF}$ is Lipschitz continuous in ${\bB}$. 
Combining this fact with the Lipschitz continuity of the mapping from $\mathbf q$ to $\mathbf B$, we conclude that $\bar{\mathbf F}(\mathbf q)$ is Lipschitz continuous with respect to $\mathbf q$.

Since $g(\bar{\mathbf F})$ and $\bar{\mathbf F}(\mathbf q)$ are both Lipschitz continuous, Lemma \ref{lemma1} implies that $f(\mathbf q)=g(\bar{\mathbf F}(\mathbf q))$  is Lipschitz continuous. This completes the proof. 

\section{Proof of Theorem \ref{equivalence2}}\label{app:equivalence2}
This appendix proves Theorem \ref{equivalence2}. Appendix \ref{subapp:general} first extends Proposition \ref{general:equivalence} to the case where $c$ is also an optimization variable, and Appendix \ref{subapp:equivalence} then applies this result to prove Theorem \ref{equivalence2}.
\subsection{A General Result}\label{subapp:general}
\begin{proposition}\label{general:equivalence2}
Consider the discrete optimization problem
\begin{equation*}
\begin{aligned}
 (P_d):  \quad 
    \max_{\mathbf X,c\geq 0}\quad & f(\mathbf X) \\
    \mathrm{s.t.}~\quad 
    & X_{i,j}\in\{-c,c\},\quad \forall i,j ,
\end{aligned}
\end{equation*}
where $\mathbf X\in\mathbb R^{m_1\times m_2}$ and $c\ge 0$ are both optimization variables. Define the penalty function
\begin{equation}
    p(\mathbf X,c)
    =\sum_{i,j}\left(X_{i,j}^2-c^2\right),
\end{equation}
and consider the continuous penalty problem
\begin{equation*}\label{Pc}
\begin{aligned}
    (P_c):\quad 
    \max_{\mathbf X,c\geq 0}\quad & f(\mathbf X)+\rho p(\mathbf X,c) \\
    \mathrm{s.t.}~\quad 
    & X_{i,j}\in[-c,c],\quad \forall i,j .
\end{aligned}
\end{equation*}
Assume that $f$ is $L$-Lipschitz continuous and $f(\mathbf 0)=0$. Suppose that there exists a feasible point $(\bar{\mathbf X},\bar c)$ of $(P_d)$ such that
\begin{equation}
    f(\bar{\mathbf X})=\bar v>0.
\end{equation}
Then, for any $
    \rho>\bar{\rho}\triangleq {(m_1m_2)^{1/2}L^2}{\bar v}^{-1}$,
any optimal solution $(\mathbf X^\star,c^\star)$ of $(P_c)$ satisfies
\begin{equation}
    X_{i,j}^\star\in\{-c^\star,c^\star\},\quad \forall i,j.
\end{equation}
\end{proposition}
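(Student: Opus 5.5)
The plan is to reduce the case with variable $c$ to Proposition \ref{general:equivalence} by freezing $c$ at its optimal value $c^\star$. The difficulty is that the threshold $Lc^{-1}$ in Proposition \ref{general:equivalence} blows up as $c\to 0$. So the key step is a uniform lower bound on $c^\star$ in terms of $\bar v$, $L$, and the dimension $m_1m_2$.

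\emph{Step 1 (the penalized optimum is at least $\bar v$).} The discrete feasible point $(\bar{\mathbf X},\bar c)$ is also feasible for $(P_c)$, since $\{-\bar c,\bar c\}\subset[-\bar c,\bar c]$. Its penalty is $p(\bar{\mathbf X},\bar c)=0$, so its $(P_c)$-objective equals $\bar v$. Hence any optimal $(\mathbf X^\star,c^\star)$ of $(P_c)$ satisfies $f(\mathbf X^\star)+\rho p(\mathbf X^\star,c^\star)\ge \bar v$. On the box, $p(\mathbf X^\star,c^\star)\le 0$, which gives $f(\mathbf X^\star)\ge \bar v>0$.

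\emph{Step 2 (lower bound on $c^\star$; the main step).} Using $f(\mathbf 0)=0$ and $L$-Lipschitz continuity, together with $|X^\star_{i,j}|\le c^\star$, we get
\[
f(\mathbf X^\star)\le L\|\mathbf X^\star\|_F\le L(m_1m_2)^{1/2}c^\star .
\]
Combining with Step 1 yields $c^\star\ge \bar v\,L^{-1}(m_1m_2)^{-1/2}>0$. In particular $c^\star=0$ is excluded, which would otherwise force $\mathbf X^\star=\mathbf 0$ and $f=0<\bar v$.

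\emph{Step 3 (freeze $c$ and invoke Proposition \ref{general:equivalence}).} Since $(\mathbf X^\star,c^\star)$ is optimal for $(P_c)$, the matrix $\mathbf X^\star$ is optimal for the fixed-$c$ problem $\max_{\mathbf X}\, f(\mathbf X)+\rho\sum_{i,j}(X_{i,j}^2-(c^\star)^2)$ over $X_{i,j}\in[-c^\star,c^\star]$. This is exactly problem $(P_c)$ of Proposition \ref{general:equivalence} with $c=c^\star>0$. The constant offset $-\rho m_1m_2 (c^\star)^2$ matches its penalty form, so nothing changes there.

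That proposition applies whenever $\rho>L/c^\star$, and by Step 2,
\[
L/c^\star\le L^2(m_1m_2)^{1/2}\bar v^{-1}=\bar\rho .
\]
Thus for $\rho>\bar\rho$ every entry satisfies $X^\star_{i,j}\in\{-c^\star,c^\star\}$, as claimed.

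Alternatively, one can repeat the rounding argument of Proposition \ref{general:equivalence} directly: compare $(\mathbf X^\star,c^\star)$ with $(\hat{\mathbf X},c^\star)$, where $\hat X_{i,j}=c^\star\operatorname{sign}(X^\star_{i,j})$. The objective gain is at least $(\rho c^\star-L)\|\hat{\mathbf X}-\mathbf X^\star\|_F$, which is positive unless $\hat{\mathbf X}=\mathbf X^\star$.

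The only nontrivial ingredient is Step 2, namely using the positive discrete value $\bar v$ and $f(\mathbf 0)=0$ to keep $c^\star$ away from zero. Everything else is inherited from Proposition \ref{general:equivalence}.
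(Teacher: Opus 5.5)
Your proposal is correct and follows essentially the same route as the paper. The paper first derives $c^\star\ge \bar v/(L\sqrt{m_1m_2})$ by comparing against the discrete point $(\bar{\mathbf X},\bar c)$ and using $f(\mathbf 0)=0$ together with $\|\mathbf X\|_F\le\sqrt{m_1m_2}\,c$, and then reruns the rounding inequality of Proposition~\ref{general:equivalence} at $c=c^\star$ (your ``alternative''). Your Step~3, which freezes $c$ at $c^\star$ and invokes Proposition~\ref{general:equivalence} as a black box, is just a repackaging of that argument. It is valid because $\mathbf X^\star$ is optimal for the fixed-$c^\star$ subproblem and $L/c^\star\le\bar\rho<\rho$.
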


\begin{proof}
Let $(\mathbf X^\star,c^\star)$ be an optimal solution of $(P_c)$.  We first show that $c^\star$ is lower bounded. Let $n=m_1m_2$. For any feasible point $(\mathbf X,c)$ of $(P_c)$, we have
 $p(\mathbf X,c)\le 0$ and 
$\|\mathbf X\|_F\le \sqrt n c. $
Using $f(\mathbf 0)=0$ and the Lipschitz continuity of $f$, we further obtain
\begin{equation}
  f(\mathbf X)+\rho p(\mathbf X,c)\le   f(\mathbf X)\le L\|\mathbf X\|_F\le L\sqrt n c.
\end{equation}
Since $(\bar{\mathbf X},\bar c)$ is feasible for $(P_d)$, it is also feasible for $(P_c)$. In addition, because
$\bar X_{i,j}\in\{-\bar c,\bar c\}$, we have
$    p(\bar{\mathbf X},\bar c)=0.
$ Hence, the optimal value of $(P_c)$ is at least
\begin{equation}
    f(\bar{\mathbf X})+\rho p(\bar{\mathbf X},\bar c)
    =\bar v>0.
\end{equation}
Then
\begin{equation}
    \bar v
    \le f(\mathbf X^\star)+\rho p(\mathbf X^\star,c^\star)
    \le L\sqrt n c^\star.
\end{equation}
Thus,
\begin{equation}
    c^\star\ge \underline c
    \triangleq \frac{\bar v}{L\sqrt n}>0.
\end{equation}

Next, we prove that $X_{i,j}^\star\in\{-c^\star,c^\star\}$.  Define the rounded point $\hat{\mathbf X}$ by
$$    \hat X_{i,j}=c^{\star}\operatorname{sign}(X_{i,j}^\star).$$  Following the same procedures as in \eqref{Xhat-Xstar}-\eqref{inequality}, we get 
$$
\begin{aligned}
    &\left[f(\hat{\mathbf X})+\rho p(\hat{\mathbf X},c^\star)\right]
    -\left[f(\mathbf X^\star)+\rho p(\mathbf X^\star,c^\star)\right]
    \\
        &\ge
    \left({\rho\underline c}-L\right)
    \|\hat{\mathbf X}-\mathbf X^\star\|_F.
\end{aligned}
$$
 Therefore, if 
  $  \rho>{L}{\underline c}^{-1}
    =
    {n^{1/2}L^2}{\bar v}^{-1},
$  we can conclude that ${\bX}^\star=\hat{\bX}$. 
\end{proof}
\subsection{Proof of Theorem \ref{equivalence2}}\label{subapp:equivalence}
As in the proof of Theorem \ref{equivalence}, for each channel sample $m$, let $\mathbf q^{(m)}\in\mathbb R^{bn_I\times 1}$ denote the vector that stacks all independent upper-triangular two-level variables $\bar{B}_{i,j}^{(m,\ell)}$ with $(i,j)\in\mathcal E$ and $j\ge i$.  Then, problem \eqref{problem:codebook2} can be equivalently expressed as an optimization problem in terms of $\{\mathbf q^{(m)}\}_{m=1}^{M}$. Following the same steps as in Appendix \ref{subapp:proof}, we can show that the resulting objective function is Lipschitz continuous in $\{\mathbf q^{(m)}\}_{m=1}^{M}$. The details are omitted to avoid repetition.

 It remains to verify the following two conditions: (i) When $c=0$,  the objective value is zero; and (ii) there exists $c>0$ and a discrete feasible point satisfying $\mathbf q^{(m)}\in\{-c,c\}^{bn_I}, m=1,2,\ldots,M$, such that the corresponding objective value is strictly positive.
 
  Condition (i) is straightforward. When $c=0$, we have $\bB^{(m)}=\mathbf{0}$ for all $m=1,2,\dots, M$. It follows that $\bar{\bF}^{(m)}=\bar{\mathbf{I}}_1$ and $\bar{\mathbf{I}}_2\bar{\bF}^{(m)}=\mathbf{0}$. Hence, the average sum-rate objective function of problem \eqref{problem:codebook2} is zero.     We next show that  Condition (ii) is satisfied if there exists at least one input port $k_0\in\{1,2,\ldots,K\}$ connecting to an output port $r_0\in\{K+1,K+2,\ldots,N+K\}$, i.e., $(k_0,r_0)\in\mathcal{E}$, as assumed in Section \ref{subsubsec:arch}. 
For all \(m=1,2,\ldots,M\), set
\[
\bar B_{i,j}^{(m,\ell)}=c,\quad \forall~\ell=1,\ldots,b,~\forall~(i,j)\in\mathcal{E}.
\]
It follows that
\begin{equation}\label{allocate:Bm}
\bar B_{i,j}^{(m)}=\sum_{\ell=1}^{b}2^{\ell-1}\bar B_{i,j}^{(m,\ell)}=(2^b-1)c,~\forall~(i,j)\in\mathcal{E}.
\end{equation}
 Define 
\(
\mathbf B_0^{(m)}=\frac{1}{c}\mathbf B^{(m)},
\)
where \(\mathbf B_0^{(m)}\) is independent of \(c\). Then
\[
\bar{\bF}^{(m)}=2\big(\mathbf I+\mathrm{i}Z_0c\mathbf B_0^{(m)}\big)^{-1}\bar{\mathbf{I}}_1-\bar{\mathbf{I}}_1.
\]
For sufficiently small \(c>0\), the Neumann-series expansion gives
\[\bar{\bF}^{(m)}=\bar{\mathbf{I}}_1-2\mathrm{i}Z_0c\mathbf B_0^{(m)}\bar{\mathbf{I}}_1+{\cal O}(c^2).\]
According to \eqref{allocate:Bm}, 
$$[{\bB}^{(m)}]_{r_0,k_0}=-\bar{B}_{r_0,k_0}^{(m)}=-(2^b-1)c,$$
and hence 
$$
\begin{aligned}
\bigl[\bar{\bF}^{(m)}\bigr]_{r_0,k_0}&=2\mathrm{i}Z_0c(2^b-1)+{\cal O}(c^2),\\
\end{aligned}
$$
which is nonzero for sufficiently small $c>0$. Hence, $|(\bh_{k_0}^{(m)})^H\bar{\mathbf{I}}_2\bar{\bF}^{(m)}_{:,k_0}|^2>0$ with probability one. 
 This further implies that the average sum-rate objective is strictly positive with probability one by allocating all transmit power to the $k_0$-th user,  which verifies Condition (ii).

To conclude, the problem in \eqref{problem:codebook2} satisfies all the conditions in Proposition \ref{general:equivalence2}. Hence, the proof is complete.

 \bibliographystyle{IEEEtran}
 \bibliography{IEEEabrv,milac_discrete}

\end{document}